\newcolumntype{R}{>{\raggedleft\arraybackslash}X}
\newcolumntype{L}{>{\raggedright\arraybackslash}X}
\newcolumntype{C}{>{\centering\arraybackslash}X}
\newcolumntype{A}{>{\columncolor{gray!25}}C}
\newcolumntype{a}{>{\columncolor{gray!25}}c}
\newcolumntype{.}{D{.}{.}{-1}}
\renewcommand\p@subfigure{\arabic{figure}.}
\renewcommand\p@subtable{A.\arabic{table}.}
\setlist[itemize]{leftmargin=3\parindent}
\setlist[enumerate]{leftmargin=2\parindent}
\theoremstyle{plain}
\newtheorem{lemma}{Lemma}
\newtheorem{proposition}{Proposition}
\newtheorem{theorem}{Theorem}
\theoremstyle{definition}
\newtheorem{definition}{Definition}
\newtheorem{example}{Example}
\theoremstyle{remark}
\newtheorem{notation}{Notation}
\newtheorem{remark}{Remark}
\def\keywords{\vspace{.5em} 
{\textit{Keywords}:\,\relax%
}}
\def\JEL{\vspace{.5em} 
{\textbf{JEL classification number}:\,\relax%
}}
\def\AMS{\vspace{.5em} 
{\textbf{AMS classification number}:\,}}
\author{L\'aszl\'o Csat\'o\thanks{~e-mail: laszlo.csato@uni-corvinus.hu} }
\affil{Department of Operations Research and Actuarial Sciences \\ Corvinus University of Budapest \\ MTA-BCE ''Lend\"ulet'' Strategic Interactions Research Group \\ Budapest, Hungary}
\title{Distance-based accessibility indices\thanks{~We are grateful to Dezs\H{o} Bednay for useful advices.
\newline
The research was supported by OTKA grant K 111797 and MTA-SYLFF (The Ryoichi Sasakawa Young
Leaders Fellowship Fund) grant 'Mathematical analysis of centrality measures', awarded to the author in
2015.}}
\date{\today}
\begin{document}

\maketitle

\begin{abstract}
The paper attempts to develop a suitable accessibility index for networks where each link has a value such that a smaller number is preferred like distance, cost, or travel time.
A measure called distance sum is characterized by three independent properties: anonymity, an appropriately chosen independence axiom, and dominance preservation, which requires that a node not far to any other is at least as accessible.
 
We argue for the need of eliminating the independence property in certain applications. Therefore generalized distance sum, a family of accessibility indices, will be suggested. It is linear, considers the accessibility of vertices besides their distances and depends on a parameter in order to control its deviation from distance sum.
Generalized distance sum is anonymous and satisfies dominance preservation if its parameter meets a sufficient condition. Two detailed examples demonstrate its ability to reflect the vulnerability of accessibility to link disruptions.

\JEL{D85, Z13}

\AMS{15A06, 91D30}

\keywords{Network; geography; accessibility; distance sum; axiomatic approach}
\end{abstract}

\section{Introduction} \label{Sec1}

Section~\ref{Sec1} aims to confirm the importance of accessibility indices. Some applications of them are presented in Subsection~\ref{Sec11} together with a detailed literature review in Subsection~\ref{Sec12}. Finally, Subsection~\ref{Sec13} gives an outline of our approach and main results.

\subsection{Motivation} \label{Sec11}

A key and basic concept in network analysis that researchers want to capture is centrality. The well-known classification of \citet{Freeman1979} distinguishes three conceptions, measured by degree, closeness, and betweenness, respectively. However, there is a frequent need for other centrality models.
One of them, mainly used in theoretical geography for analysing social activities and regional economies, can be called \emph{accessibility}. Accessibility index provides a numerical answer to questions such as 'How accessible is a node from other nodes in a network?' or 'What is its relative geographical importance?'.

Accessibility measures have a number of interesting ways of utilization:
\begin{enumerate}
\item
Knowledge of which nodes have the highest accessibility could be of interest in itself (e.g. by revealing their strategic importance);
\item
The accessibility of vertices could be statistically correlated to other economic, sociological or political variables;
\item
Accessibility of the same nodes (e.g. urban centres) in different (e.g. transportation, infrastructure) networks could be compared;
\item
Proposed changes in a network could be evaluated in terms of their effect on the accessibility of vertices;
\item
Networks (e.g. empires) could be compared by their propensity to disintegrate. For example, it may be difficult to manage from a unique centre if the most accessible nodes are far from each other. 
\end{enumerate}

Practical application of accessibility involve, among others, an analysis of the medieval river trade network of Russia \citep{Pitts1965, Pitts1979}, of the medieval Serbian oecumene \citep{Carter1969}, of the interstate highway network for cities in the southeastern United States \citep{Garrison1960, MackiewiczRatajczak1996}, of the inter-island voyaging network of the Marshall islands \citep{HageHarary1995}, or of the global maritime container transportation network \citep{WangCullinane2008}.
Accessibility indices can be used in operations research, too, where a typical problem is that of choosing a site for a facility on the basis of a specific criterion \citep{Slater1981}.

\subsection{Some methods for measuring accessibility} \label{Sec12}

The adjacency matrix $C$ of the network graph is given by $c_{ij} = 1$ if and only if nodes $i$ and $j$ are connected and $c_{ij} = 0$ otherwise.

One of the first accessibility indices, sometimes called \emph{connection array}, was introduced by \citet{Garrison1960}. It is based on powering of matrix $C$ such that
\[
T = C + C^2 + C^3 + \dots + C^m = \sum_{i=1}^m C^i.
\]
The accessibility of a node comes from summing the corresponding column of matrix $T$, i.e. the total number of at most $m$-long paths to other nodes. It was used by \citet{Pitts1965} and \citet{Carter1969} despite two significant shortcomings:
\begin{itemize}
\item
The increase of $m$ distorts reality as some redundant paths (i.e. longer than the shortest path in a topological sense) are included in the calculation of accessibility.
\item
It is not obvious what is the appropriate value of $m$. The usual choice is the \emph{diameter}, the
maximum number of edges in the shortest path between the furthest pair of nodes, which provides that $T$ has only positive elements (if $m \geq 2$).
\end{itemize}
The first problem may be addressed by introducing a multiplier for $C^i$ showing exponential decay in the calculation of matrix $T$, as suggested by \citet{Garrison1960} following \citet{Katz1953}. However, it is still not able to completely overcome the inherent failures of the method \citep{MackiewiczRatajczak1996}.

In order to improve the comparability of results achieved for various $m$, \citet{Stutz1973} proposed a formula to determine the relative accessibility of nodes.

Naturally, one can focus only on the shortest paths between the nodes. Then a plausible measure is the total length of them to all other nodes \citep{Shimbel1951, Shimbel1953, Harary1959, Pitts1965, Carter1969}, which is called \emph{shortest-path array}.

\emph{Betweenness} measures how many shortest paths pass through the given node \citep{Shimbel1953, Freeman1977}. It was also applied as an accessibility index \citep{Pitts1979}.

\emph{Eccentricity} concentrates on the maximum distance to all other nodes. Sometimes it is the suitable concept for predicting the politically and symbolically important nodes in a network \citep{HageHarary1995}.

A whole branch of literature, initiated by \citep{Gould1967}, propagates the use of eigenvectors for measuring accessibility \citep{Carter1969, Tinkler1972, Straffin1980, MackiewiczRatajczak1996, WangCullinane2008}.
It is based on the Perron-Frobenius theorem \citep{Perron1907, Frobenius1908, Frobenius1909, Frobenius1912} for nonnegative square matrices: matrix $C$ has a unique (up to constant multiples) positive eigenvector corresponding to the principal eigenvalue. It have been proposed that the non-principal eigenvectors might also have geographical meaning, however, its use remains controversial \citep{Carter1969, Hay1975, Tinkler1975, Straffin1980}.

Finally, \citet{AmerGimenezMagana2012} construct accessibility indices to the nodes of directed graphs using methods of game theory.

\subsection{Aims and tools} \label{Sec13}

The focus on the adjacency matrix has no sense in some applications, where the network graph is closely complete \citep{WangCullinane2008}.
However, as \citet{TaaffeGauthier1973} claim, it is not necessary to restrict the concept to a purely topological one. Value of linkages may be given as a distance between the nodes in actual mileage as well as the cost of movement or the time required to travel between them. Actually, it was implemented by \citet{Carter1969} and \citet{Pitts1979} using the distance as a value.

We will call the edge values \emph{distance} throughout the paper. It may be any measure satisfying triangle inequality such that a smaller value is better. It can be assumed without loss of generality that the network is complete, i.e. a value is available for each link.
While the extension of shortest-path array to this domain, said to be the \emph{distance sum}, is obvious, the applicability of other measures -- presented in Subsection~\ref{Sec12} -- is ambiguous, despite the use of the principal eigenvector by \citet{WangCullinane2008}.

The paper aims to analyse accessibility indices in these networks.
For this purpose the axiomatic approach, a standard path in game and social choice theory will be applied. Distance sum will be characterized, that is, a set of properties will be given such that it is the only accessibility index satisfying them, but eliminating any axiom allows for other measures. According to our knowledge, no axiomatizations of accessibility indices exist. However, some results are provided for other centrality measures \citep{MonsuurStorcken2004, Garg2009, Kitti2012, DequiedtZenou2014}.

The characterization of distance sum, one of the main results, contains three axioms: \emph{anonymity}, an appropriately chosen independence axiom (\emph{independence of distance distribution}), and \emph{dominance preservation}, which requires that a node not far to any other is at least as accessible. Furthermore, it will be presented through an example that independence of distance distribution is a property one would rather not have in certain cases.
Therefore a parametric family of accessibility indices will be proposed.

It will be called \emph{generalized distance sum} as it redistributes the pool of aggregated distances (i.e. the sum of distances over the network) by considering distance distribution: in the case of two nodes with the same distance sum, it is deemed better to be close to more accessible nodes (and therefore, to be far to less accessible ones) than vice versa.
It also takes into account -- similarly to connection array -- longer paths than the shortest, their impact is governed by the parameter. The main argument for their inclusion is that shortest paths may be vulnerable to disruptions, when the value of alternative routes becomes relevant. The limit of generalized distance sum is the distance sum.

Generalized distance sum satisfies anonymity and violates independence of distance distribution.
The other main result provides a sufficient condition for this measure -- by limiting the value of its parameter -- to meet dominance preservation, the third axiom in the characterization of distance sum.
However, it will not be axiomatized. 

While it is not debated that characterizations are a correct way to distinguish between accessibility indices, we think they have limited significance for applications: if one should determine the accessibilities in a \emph{given} network, he/she is not much interested in the properties of the measure on other networks. Characterizations could reveal some aspects of the choice but the consequences of the axioms on the actual network remain obscure.

The axiomatic point of view is not exclusive. \citet{BorgattiEverett2006} criticize this way, taken by \citet{Sabidussi1966}, because it does not 'actually attempt to explain what centrality is'.
Thus generalized distance sum will be scrutinized through some examples, showing that it is able to reflect the structure of the network. 

The paper is structured as follows. Section~\ref{Sec2} describes the characterization of distance sum. Subsection~\ref{Sec21} defines the model and the accessibility index. Subsection~\ref{Sec22} is devoted to introduce the axioms and to prove the theorem as well as to discuss the independence of properties and an extension of the result.
Section~\ref{Sec3} presents a new parametric measure of accessibility by challenging an axiom of the previous characterization. Generalized distance sum is motivated and introduced in Subsection~\ref{Sec31}. We also deal with some of its properties and return to address triangle inequality, a potential constraint of the model. Subsection~\ref{Sec32} gives a sufficient condition for generalized distance sum to satisfy another property used in the axiomatization. It is shown that an excessive value of the parameter may lead to counter-intuitive accessibility rankings. Two examples are discussed in Subsection~\ref{Sec33} to reveal some interesting feature of the suggested measure.
Sections~\ref{Sec2} and \ref{Sec3} contain somewhat independent results, the axiomatization of distance sum is wholly autonomous, and generalized distance sum may be submitted and discussed without the characterization, too.
Finally, Section~\ref{Sec4} summarizes our findings and outlines some directions for future research.

\section{A characterization of distance sum} \label{Sec2}

In this section, some natural conditions will be conceived for accessibility indices, similarly to centrality measures \citep{Sabidussi1966, Chienetal2004, LandherrFriedlHeidemann2010, BoldiVigna2014}.
Distance sum, an obvious solution for our problem, will be characterized, that is, a set of properties will be given such that it is the only accessibility index satisfying them, while eliminating any axiom allows for other measures.

\subsection{The model} \label{Sec21}

\begin{definition} \label{Def1}
\emph{Transportation network}:
\emph{Transportation network} is a pair $(N,D)$ such that
\begin{itemize}
\item
$N = \{ 1,2, \dots ,n \}$ is a finite set of nodes;
\item
$D \in \mathbb{R}_+^{n \times n}$ is a symmetric distance matrix satisfying triangle inequality, namely, $d_{ij} \leq \sum_{\ell = 0}^{m-1} d_{k_\ell k_{\ell + 1}}$ where $(i = k_0, k_1, \dots ,k_m = j)$ is a path between the nodes $i$ and $j$.
$d_{ii} = 0$ for all $i = 1,2,\dots ,n$.
\end{itemize}
\end{definition}

$D$ can be the adjacency matrix of a complete, weighted, undirected graph (without loops and multiple edges) with the set of nodes $N$.\footnote{~Assumption of completeness is not restrictive in the case of connected graphs since the distance of any pair of nodes can be measured by the shortest path between them.}

\begin{notation} \label{Not1}
$\mathcal{N}^n$ is the class of all transportation networks $(N,D)$ with $|N| = n$.
\end{notation}

\begin{definition} \label{Def2}
\emph{Accessibility index}:
Let $(N,D) \in \mathcal{N}^n$ be a transportation network. \emph{Accessibility index} $f$ is a function that assigns an $n$-dimensional vector of real numbers to $(N,D)$ with $f_i(N,D)$ showing the accessibility of node $i$.
\end{definition}

\begin{notation} \label{Not2}
$f: \mathcal{N}^n \to \mathbb{R}^n$ is an accessibility index.
\end{notation}

Node $i$ is said to be at least as accessible as node $j$ in the transportation network $(N,D)$ if and only if $f_i(N,D) \leq f_j(N,D)$, so a smaller value of accessibility is more favourable.

\begin{definition} \label{Def3}
\emph{Order equivalence}:
Let $f,g: \mathcal{N}^n \to \mathbb{R}^n$ be two accessibility indices. They are called \emph{order equivalent} if and only if $f_i(N,D) \leq f_j(N,D) \Leftrightarrow f_i(N,D) \leq f_j(N,D)$ for any transportation network $(N,D) \in \mathcal{N}^n$. .
\end{definition}

\begin{notation} \label{Not3}
$f \approx g$ means that accessibility measures $f,g: \mathcal{N}^n \to \mathbb{R}^n$ are order equivalent.
\end{notation}

Since we focus on accessibility rankings, order equivalent accessibility indices are not distinguished. For example, an accessibility index is invariant under multiplication by positive scalars.

Throughout the paper, vectors are denoted by bold fonts and assumed to be column vectors.
Let $\mathbf{e} \in \mathbb{R}^n$ be the column vector such that $e_i = 1$ for all $i = 1,2, \dots ,n$ and $I \in \mathbb{R}^{n \times n}$ be the matrix with $I_{ij} = 1$ for all $i,j = 1,2, \dots ,n$.

The first, almost trivial idea to measure accessibility can be the sum of distances to all other nodes. In fact, it is extensively used in the literature \citep{Shimbel1951, Shimbel1953, Harary1959, Pitts1965, Carter1969}.

\begin{definition} \label{Def4}
\emph{Distance sum}: $\mathbf{d^\Sigma}: \mathcal{N}^n \to \mathbb{R}^n$ such that $d_i^\Sigma = \sum_{j \in N} d_{ij}$ for all $i \in N$.
\end{definition}

\subsection{Axioms and characterization} \label{Sec22}

The first condition is independence of the labelling of the nodes.

\begin{definition} \label{Def5}
\emph{Anonymity} ($ANO$):
Let $(N,D),(\sigma N, \sigma D) \in \mathcal{N}$ be two transportation networks such that $(\sigma N, \sigma D)$ is given by a permutation of nodes $\sigma: N \rightarrow N$ from $(N,D)$.
Accessibility index $f: \mathcal{N}^n \to \mathbb{R}^n$ is called \emph{anonymous} if $f_i(N,D) = f_{\sigma i}(\sigma N, \sigma D)$ for all $i \in N$.
\end{definition}

Property $ANO$ implies that two symmetric nodes are equally accessible. It also ensures that all nodes have the same accessibility in a transportation network with the same distance between any pair of nodes.

\begin{lemma} \label{Lemma1}
Distance sum satisfies $ANO$.
\end{lemma}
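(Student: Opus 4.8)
The plan is to verify the defining condition of anonymity directly from the closed-form expression for distance sum, relying only on the elementary fact that a finite sum is unaffected when its index set is relabelled by a bijection.

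First I would make explicit how the permuted network $(\sigma N,\sigma D)$ relates to $(N,D)$: by construction of the permutation, the entry of the permuted distance matrix in position $(\sigma i,\sigma j)$ equals the original entry $d_{ij}$, that is, $(\sigma D)_{\sigma i,\sigma j} = d_{ij}$ for all $i,j \in N$. Since $\sigma\colon N \to N$ is a bijection, as $j$ runs over all of $N$ the image $\sigma j$ also runs over all of $N$ exactly once. Then, for an arbitrary node $i \in N$, I would compute
\[
d_{\sigma i}^\Sigma(\sigma N,\sigma D) = \sum_{k \in N} (\sigma D)_{\sigma i, k} = \sum_{j \in N} (\sigma D)_{\sigma i, \sigma j} = \sum_{j \in N} d_{ij} = d_i^\Sigma(N,D),
\]
where the second equality is the substitution $k = \sigma j$. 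This is precisely the requirement $f_i(N,D) = f_{\sigma i}(\sigma N,\sigma D)$ from Definition~\ref{Def5}, so the lemma follows.

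There is essentially no obstacle here; the single point deserving a moment's attention is the reindexing step, which is valid exactly because $\sigma$ is a bijection of the finite set $N$ onto itself. The two consequences mentioned after Definition~\ref{Def5} — that two symmetric nodes receive the same distance sum, and that all nodes are equally accessible when every pairwise distance is equal — are immediate specializations of this computation and require no separate argument.
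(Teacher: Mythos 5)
Your proof is correct and is exactly the direct verification the paper implicitly relies on (Lemma~\ref{Lemma1} is stated without proof as an obvious consequence of Definitions~\ref{Def4} and \ref{Def5}). The reindexing of the finite sum via the bijection $\sigma$, together with the identity $(\sigma D)_{\sigma i,\sigma j}=d_{ij}$, is all that is needed, and you have handled both points properly.
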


While distance sum is usually a good baseline to approximate accessibility, it does not consider the distribution of distances.

\begin{definition} \label{Def6}
\emph{Independence of distance distribution} ($IDD$):
Let $(N,D) \in \mathcal{N}^n$ be a transportation network and $i,j,k,\ell \in N$ be four distinct nodes.
Let $f: \mathcal{N}^n \to \mathbb{R}^n$ be an accessibility index such that $f_i(N,D) \leq f_j(N,D)$ and $(N,D') \in \mathcal{N}^n$ be a transportation network identical to $(N,D)$ except for $d'_{ik} \neq d_{ik}$ and $d'_{i \ell} \neq d_{i \ell}$ but $d'_{ik} + d'_{i \ell} = d_{ik} + d_{i \ell}$. \\
$f$ is called \emph{independent of distance distribution} if $f_i(N,D') \leq f_j(N,D')$.
\end{definition}

Property $IDD$ implies that the accessibility ranking does not change if the distance sum of each node remains the same.

\begin{lemma} \label{Lemma2}
Distance sum satisfies $IDD$.
\end{lemma}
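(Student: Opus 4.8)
The plan is to verify the defining implication of $IDD$ directly for $f = \mathbf{d^\Sigma}$. Fix a transportation network $(N,D) \in \mathcal{N}^n$, four distinct nodes $i,j,k,\ell \in N$, and a network $(N,D') \in \mathcal{N}^n$ obtained as in Definition~\ref{Def6}: $D'$ coincides with $D$ except that $d'_{ik} \neq d_{ik}$ and $d'_{i\ell} \neq d_{i\ell}$, while $d'_{ik} + d'_{i\ell} = d_{ik} + d_{i\ell}$. Assume $d_i^\Sigma(N,D) \leq d_j^\Sigma(N,D)$; the goal is $d_i^\Sigma(N,D') \leq d_j^\Sigma(N,D')$.

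The key observation is that, because $D$ and $D'$ are symmetric, the only entries in which they differ are $d_{ik} = d_{ki}$ and $d_{i\ell} = d_{\ell i}$. First I would compute the distance sum of $i$ under $D'$ by separating the two affected terms:
\[
d_i^\Sigma(N,D') = \sum_{m \in N \setminus \{k,\ell\}} d'_{im} + d'_{ik} + d'_{i\ell} = \sum_{m \in N \setminus \{k,\ell\}} d_{im} + d_{ik} + d_{i\ell} = d_i^\Sigma(N,D),
\]
where the middle equality uses that the unaffected entries are unchanged and that $d'_{ik} + d'_{i\ell} = d_{ik} + d_{i\ell}$ by hypothesis. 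Next, since $j$ is distinct from $i$, row $j$ of $D'$ is identical to row $j$ of $D$ (none of its entries sit in positions $(i,k),(k,i),(i,\ell),(\ell,i)$), hence $d_j^\Sigma(N,D') = d_j^\Sigma(N,D)$. Combining these two identities with the assumed inequality gives $d_i^\Sigma(N,D') = d_i^\Sigma(N,D) \leq d_j^\Sigma(N,D) = d_j^\Sigma(N,D')$, which is exactly what $IDD$ demands.

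There is essentially no obstacle here: the statement reduces to the elementary fact that redistributing weight between two entries of row $i$ leaves the $i$-th row sum invariant and touches no other row sum (bearing in mind symmetry, which also alters columns $k$ and $\ell$, but only within row $i$). The only point requiring a moment's care is the bookkeeping of which matrix entries change, together with the use of the distinctness of $i,j,k,\ell$ to guarantee that the comparison node $j$'s distance sum is untouched. In fact the argument shows slightly more, namely that $\mathbf{d^\Sigma}$ assigns the same value to every node in $(N,D')$ as in $(N,D)$ except possibly to $k$ and $\ell$, so the entire accessibility ranking restricted away from $\{k,\ell\}$ is preserved; this in particular yields the required conclusion.
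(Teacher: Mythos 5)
Your proof is correct and is exactly the direct verification the paper has in mind (the paper omits the proof of Lemma~\ref{Lemma2} as immediate): the redistribution leaves $d_i^\Sigma$ unchanged because $d'_{ik}+d'_{i\ell}=d_{ik}+d_{i\ell}$, and leaves $d_j^\Sigma$ unchanged because $j\notin\{i,k,\ell\}$, so the inequality persists. Your closing remark that only the distance sums of $k$ and $\ell$ can change is a correct and slightly stronger observation, consistent with the paper's comment that $IDD$ concerns modifications preserving each compared node's distance sum.
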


\begin{definition} \label{Def7}
\emph{Dominance}:
Let $(N,D) \in \mathcal{N}^n$ be a transportation network and $i,j \in N$ be two distinct nodes such that $d_{ik} \leq d_{jk}$ for all $k \in N \setminus \{ i,j \}$ with a strict inequality ($<$) for at least one $k$. \\
Then it is said that node $i$ \emph{dominates} node $j$.
\end{definition}

A natural requirement for accessibility indices can be that the accessibility of 'obviously' more external nodes is worse.

\begin{definition} \label{Def8}
\emph{Dominance preservation} ($DP$):
Let $(N,D) \in \mathcal{N}^n$ be a transportation network and $i,j \in N$ be two distinct nodes such that node $i$ dominates node $j$. \\
Accessibility index $f: \mathcal{N}^n \to \mathbb{R}^n$ \emph{preserves dominance} if $f_i(N,D) < f_j(N,D)$.
\end{definition}

Property $DP$ demands that if node $j$ is at least as far from all nodes as node $i$, then it has a larger accessibility value provided that they could not be labelled arbitrarily.
A similar axiom called \emph{self-consistency} has been suggested by \citep{ChebotarevShamis1997b_rus} for preference aggregating methods.

\begin{lemma} \label{Lemma3}
Distance sum satisfies $DP$.
\end{lemma}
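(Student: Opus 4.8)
\textbf{Proof plan for Lemma~\ref{Lemma3}.}
The plan is to compare $d_i^\Sigma$ and $d_j^\Sigma$ directly, term by term, under the dominance hypothesis. First I would expand each distance sum by separating out the contributions of the indices $i$ and $j$ themselves: write $d_i^\Sigma = d_{ii} + d_{ij} + \sum_{k \in N \setminus \{i,j\}} d_{ik}$ and $d_j^\Sigma = d_{jj} + d_{ji} + \sum_{k \in N \setminus \{i,j\}} d_{jk}$. Using $d_{ii} = d_{jj} = 0$ from Definition~\ref{Def1} and the symmetry $d_{ij} = d_{ji}$, these two ``boundary'' contributions cancel when we take the difference.

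Next I would form the difference
\[
d_j^\Sigma - d_i^\Sigma = \sum_{k \in N \setminus \{i,j\}} \left( d_{jk} - d_{ik} \right),
\]
and invoke the definition of dominance (Definition~\ref{Def7}): each summand satisfies $d_{jk} - d_{ik} \geq 0$, and at least one summand is strictly positive. Hence the whole sum is strictly positive, i.e. $d_i^\Sigma < d_j^\Sigma$, which is precisely what $DP$ (Definition~\ref{Def8}) requires. Since order-equivalent indices are identified and the inequality is strict, no further normalization is needed.

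There is essentially no obstacle here: the only points requiring care are the correct handling of the zero diagonal entries and of the symmetry of $D$ so that the $i,j$-contributions drop out of the comparison, and noting that ``at least one strict inequality'' is exactly what upgrades $d_i^\Sigma \leq d_j^\Sigma$ to the strict inequality demanded by $DP$. (One should also observe that the node $k$ witnessing the strict inequality indeed lies in $N \setminus \{i,j\}$, which is guaranteed by the statement of Definition~\ref{Def7}.)
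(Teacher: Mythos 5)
Your argument is correct: the paper states Lemma~\ref{Lemma3} without proof, treating it as immediate, and your term-by-term comparison (cancelling the $d_{ij}=d_{ji}$ and zero-diagonal contributions, then summing the nonnegative differences $d_{jk}-d_{ik}$ over $k \in N \setminus \{i,j\}$ with one strictly positive) is exactly the intended verification. Nothing is missing.
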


Requirements $ANO$, $IDD$ and $DP$ still give a characterization of distance sum.

\begin{theorem} \label{Theo1}
If an accessibility index $f: \mathcal{N}^n \to \mathbb{R}^n$ is anonymous, independent of distance distribution and preserves dominance, then $f$ is the distance sum.
\end{theorem}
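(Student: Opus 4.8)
Since order equivalent accessibility indices are identified, the statement to prove is $f\approx\mathbf{d^\Sigma}$, that is, $f_i(N,D)\le f_j(N,D)\Leftrightarrow d_i^\Sigma\le d_j^\Sigma$ for every $(N,D)\in\mathcal{N}^n$ and all distinct nodes $i,j$. Fix such a network and such a pair, and put $S=N\setminus\{i,j\}$. When $|N|\le 3$ the axiom $IDD$ is vacuous and the claim already follows from $ANO$ and $DP$ alone: if $d_i^\Sigma=d_j^\Sigma$ the transposition of $i$ and $j$ is an automorphism, and if $d_i^\Sigma<d_j^\Sigma$ then $i$ dominates $j$. So assume $|N|\ge 4$.

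\emph{Step 1: what $IDD$ buys.} A single redistribution of node $i$'s distances (keeping the result a transportation network) is reversible, so applying $IDD$ in both directions yields $f_i(N,D)\le f_j(N,D)\Leftrightarrow f_i(N,D')\le f_j(N,D')$, and since any two nodes are comparable this equivalence also holds for $>$ and for $=$; the same is true for redistributions of $j$'s distances by the symmetric reading of $IDD$. Hence the relative order of $f_i$ and $f_j$ is invariant under any finite chain of moves, each of which redistributes the distances of $i$ or of $j$. Such moves never alter the sub-matrix of $D$ on $S$, and — with $d_{ij}$ held fixed — they let the pair $\big((d_{ik})_{k\in S},(d_{jk})_{k\in S}\big)$ range over the entire set of pairs that, together with the fixed $S$-block and $d_{ij}$, form a transportation network with coordinate sums $d_i^\Sigma-d_{ij}$ and $d_j^\Sigma-d_{ij}$; this set is convex, hence connected, and motion inside it decomposes into elementary redistributions (alternately among the $d_{ik}$ and among the $d_{jk}$), so any point of it is reachable from the starting one.

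\emph{Step 2: the target networks.} I would steer $(N,D)$ by such moves to a network $\hat D$ of a special shape and read off the verdict. If $d_i^\Sigma=d_j^\Sigma$, the target is $\hat d_{ik}=\hat d_{jk}$ for all $k\in S$; then $i\leftrightarrow j$ is an automorphism of $\hat D$, so $ANO$ gives $f_i(N,\hat D)=f_j(N,\hat D)$, which transports back to $f_i(N,D)=f_j(N,D)$. If $d_i^\Sigma<d_j^\Sigma$, the target is $\hat d_{ik}\le\hat d_{jk}$ for all $k\in S$ with strict inequality for some $k$, i.e. $i$ dominates $j$ in $\hat D$; then $DP$ gives $f_i(N,\hat D)<f_j(N,\hat D)$, hence $f_i(N,D)<f_j(N,D)$. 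Together with the mirror case $d_i^\Sigma>d_j^\Sigma$ this establishes the equivalence, and with it the theorem.

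\emph{The main obstacle.} By Step 1 the one thing left is to exhibit such a $\hat D$: a common vector $(a_k)_{k\in S}$ for $i$ and $j$ (respectively vectors $a\le b$) with coordinate sums $d_i^\Sigma-d_{ij}$ (and $d_j^\Sigma-d_{ij}$) that keeps \emph{all} triangle inequalities, including the freshly imposed ones $d_{ij}\le 2a_k$ and $a_k+a_\ell\ge d_{k\ell}$ (respectively $b_k-a_k\le d_{ij}$ and $d_{ij}\le a_k+b_k$). This is where the triangle inequality really bites. The constant choice $a_k\equiv(d_i^\Sigma-d_{ij})/(n-2)$ always satisfies $2a_k\ge d_{ij}$ — summing $d_{ij}\le d_{ik}+d_{jk}$ over $k\in S$ gives $n\,d_{ij}\le d_i^\Sigma+d_j^\Sigma=2d_i^\Sigma$ — but can violate $a_k+a_\ell\ge d_{k\ell}$ when $(S,d)$ contains one far-apart pair; a choice like $a_k=\max\{d_{ik},\tfrac12 d_{ij}\}$ meets all the new inequalities but overshoots the prescribed sum, so it must be pared down along directions where the constraints are slack (which takes a little care about the tight triples of $(S,d)$). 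Assembling one construction that simultaneously hits the sum, stays non-negative and $1$-Lipschitz on $(S,d)$, and respects the cross-inequalities with both $i$ and $j$ — in the dominance case possibly after first enlarging $d_{ij}$ by an $i$-move to make room for $b-a$ — is the technical heart of the proof; once $\hat D$ is in hand, $ANO$ and $DP$ close the argument at once.
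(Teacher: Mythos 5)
Your overall strategy is the paper's: use $IDD$ to transport the pair $(i,j)$ to a network where either $ANO$ (symmetry) or $DP$ (dominance) decides the comparison, then carry the verdict back. Your Step 1 is a careful justification of something the paper uses implicitly, and your treatment of $n\le 3$ matches the paper's Remark~1. The problem is that the proof is not actually finished: the entire argument funnels into the existence of the target network $\hat D$, and you explicitly leave that construction open ("the technical heart of the proof"). As it stands, this is a plan plus an acknowledged obstacle, not a proof.

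The obstacle is also partly self-inflicted. The paper does not move both rows to a common vector $(a_k)_{k\in S}$; it leaves node $i$'s row completely untouched and only rewrites node $j$'s row, setting $d'_{j\ell}=d_{i\ell}$ for all $\ell\in N\setminus\{i,j,k\}$ and dumping the whole surplus on one auxiliary coordinate, $d'_{jk}=d_{ik}+\bigl(d_j^\Sigma-d_i^\Sigma\bigr)$. This is a pure redistribution of $j$'s distances (so $IDD$ applies, after decomposing it into pairwise moves as in your Step 1), it preserves both distance sums, and it immediately yields either an $i\leftrightarrow j$ automorphism or domination of $j$ by $i$. Crucially, because $i$'s row and the $S$-block are unchanged, all of your constraints of the form $a_k+a_\ell\ge d_{k\ell}$ and the cross-inequalities with $i$ disappear; only the triangle inequalities involving the new $j$-row remain to be checked. (To be fair, the paper does not check even those — e.g. $d'_{ij}\le d'_{i\ell}+d'_{\ell j}=2d_{i\ell}$ is needed and is not automatic — so your instinct that the triangle inequality "really bites" is legitimate; but a complete argument must either verify these inequalities for some admissible target or show how to reach one, and neither your proposal nor a one-line appeal to convexity does that.)
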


\begin{proof}
Consider a transportation network $(N,D) \in \mathcal{N}^n$ and three nodes $i,j,k \in N$. It can be assumed that $\sum_{m \in N} d_{im} \leq \sum_{m \in N} d_{jm}$. Define the transportation network $(N,D') \in \mathcal{N}^n$ such that $d'_{im} = d_{im}$ for all $m \in N$, $d'_{jk} = d'_{ik} + \left( \sum_{m \in N} d_{jm} - \sum_{m \in N} d_{im} \right) \leq d'_{ik}$ and $d'_{j \ell} = d'_{i \ell}$ for all $\ell \in N \setminus \{ i,j,k \}$. Two cases can be distinguished:
\begin{itemize}
\item
$\sum_{m \in N} d_{im} < \sum_{m \in N} d_{jm}$

Then $d'_{ik} < d'_{jk}$, so $f_i(N,D') < f_j(N,D')$ due to $DP$ and $f_i(N,D) < f_j(N,D)$ because of $IDD$ as $\sum_{m \in N} d'_{im} = \sum_{m \in N} d_{im}$ and $\sum_{m \in N} d'_{jm} = \sum_{m \in N} d_{jm}$.

\item
$\sum_{m \in N} d_{im} = \sum_{m \in N} d_{jm}$

Then $d'_{ik} = d'_{jk}$, so $f_i(N,D') = f_j(N,D')$ due to $ANO$ and $f_i(N,D) = f_j(N,D)$ because of $IDD$ as $\sum_{m \in N} d'_{im} = \sum_{m \in N} d_{im}$ and $\sum_{m \in N} d'_{jm} = \sum_{m \in N} d_{jm}$.
\end{itemize}

It implies that $f \approx \mathbf{d^\Sigma}$. 
\end{proof}

\begin{remark} \label{Rem1}
The proof of Theorem~\ref{Theo1} requires at least three nodes. However, $IDD$ has a meaning only for four nodes. It can be seen that $ANO$ and $DP$ are enough to characterize distance sum if $n=3$ since $d_{ik} < d_{jk} \Rightarrow f_i(N,D) < f_j(N,D)$ because of dominance preservation and $d_{ik} = d_{jk} \Rightarrow f_i(N,D) = f_j(N,D)$ due to anonymity.
Note that $d_{ij} \leq d_{ik} \Leftrightarrow \sum_{m \in N} d_{im} \leq \sum_{m \in N} d_{jm}$ by definition.
\end{remark}

Now the independence of the axioms is addressed.

\begin{proposition} \label{Prop1}
Anonymity, independence of distance distribution and domination preservation are logically independent conditions.
\end{proposition}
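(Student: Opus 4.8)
Being a logical-independence statement, it calls for the standard three-example argument: for each of $ANO$, $IDD$, $DP$ I would exhibit an accessibility index satisfying the other two conditions but not that one. (As Remark~\ref{Rem1} already notes, $IDD$ is vacuous for $n \le 3$, so the claim is meaningful only for $n \ge 4$.) Two of the three examples are routine. For a measure violating $DP$ while satisfying $ANO$ and $IDD$, take $f = -\mathbf{d^\Sigma}$: anonymity is inherited from the distance sum; $f_i(N,D) \le f_j(N,D) \Leftrightarrow d_i^\Sigma \ge d_j^\Sigma$, and an $IDD$-admissible change alters neither $d_i^\Sigma$ nor $d_j^\Sigma$ (it only redistributes node~$i$'s distances among nodes other than $i$ and $j$), so that inequality survives; but if $i$ dominates $j$ then $d_i^\Sigma < d_j^\Sigma$, hence $f_i(N,D) > f_j(N,D)$, contrary to $DP$. (The constant index $f \equiv \mathbf{0}$ would do as well.) For a measure violating $IDD$ while satisfying $ANO$ and $DP$, take the sum of squared distances $f_i(N,D) = \sum_{j \in N} d_{ij}^2$: anonymity is clear, and if $i$ dominates $j$ then, since $d_{ij} = d_{ji}$, $f_j(N,D) - f_i(N,D) = \sum_{k \in N \setminus \{i,j\}} ( d_{jk}^2 - d_{ik}^2 ) > 0$, so $DP$ holds; to break $IDD$ I would exhibit a four-node network in which two nodes have equal squared-distance sums and then shift a small amount of distance from a nearer to a farther neighbour inside one node's row — strict convexity of $t \mapsto t^2$ then strictly increases that node's value while leaving the other's fixed. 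The one thing to watch here is that the original and perturbed matrices both satisfy triangle inequality, which is easy to arrange (for instance with all distances near a common value).

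The real obstacle is an index that violates $ANO$ but satisfies $DP$ and $IDD$. The naive attempts all fail precisely because of $DP$: any \emph{systematic} label-dependent feature — privileging a fixed node, adding the distance to a fixed node, rescaling coordinate~$i$ by a label-dependent factor, and so on — can be made to contradict $DP$, since with the help of triangle inequality one can always build a network whose dominance relation ``pulls against'' that feature and flips the required strict inequality. The resolution I would adopt is to confine the label-dependence to \emph{ties} of the distance-sum order, where $DP$ imposes nothing (two nodes with equal distance sum are never dominance-comparable, because $i$ dominating $j$ forces $d_i^\Sigma < d_j^\Sigma$), and to make it small enough never to disturb a strict distance-sum comparison. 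Concretely I would set
\[
f_i(N,D) = d_i^\Sigma + \varepsilon(D)\, i,\qquad \varepsilon(D) = \tfrac{1}{n}\,\min\{\,|d_a^\Sigma - d_b^\Sigma| : a,b \in N,\ d_a^\Sigma \ne d_b^\Sigma\,\},
\]
with the convention $\varepsilon(D) = 1$ when all the $d_a^\Sigma$ coincide.

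Within any fixed network this $f$ induces the ranking ``distance sum, ties broken by label'': the bound $n\,\varepsilon(D) \le$ (smallest positive gap between distance sums) guarantees that the term $\varepsilon(D)\,i$ never reverses a strict inequality $d_i^\Sigma < d_j^\Sigma$, so $DP$ holds; and in the network where all distances are equal, all distance sums coincide while the labels still separate the nodes, so $ANO$ fails — note also that this $f$ is not order equivalent to $\mathbf{d^\Sigma}$, so dropping anonymity genuinely admits other measures. For $IDD$, an admissible change keeps $d_i^\Sigma$ and $d_j^\Sigma$ fixed and only touches node~$i$'s distances to two nodes different from $j$, so the comparison of $f_i$ and $f_j$ — which reduces to comparing $d_i^\Sigma$ with $d_j^\Sigma$, or, in a tie, $i$ with $j$ — is preserved even though $\varepsilon(D)$ may itself change.

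The step I expect to require the most care is this last example: checking, case by case (including degenerate networks in which all distance sums are equal), that the perturbation $\varepsilon(D)\,i$ stays within the gaps and that $IDD$ is not disturbed by the network-dependence of $\varepsilon(D)$. The two routine examples, together with the explicit four-node instance witnessing the failure of $IDD$ for the sum of squared distances, then complete the argument that none of $ANO$, $IDD$, $DP$ follows from the conjunction of the other two.
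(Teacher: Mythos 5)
Your proposal is correct and follows essentially the same route as the paper's proof: three witness indices, one per axiom. The paper uses the negated distance sum (identical to your $-\mathbf{d^\Sigma}$); the distance product $\prod_{j \neq i} d_{ij}$ in place of your sum of squares as the $ANO$-and-$DP$-but-not-$IDD$ witness (shown there only to be non-equivalent to distance sum on a four-node example, so that Theorem~\ref{Theo1} forces the $IDD$ failure, where you argue directly via strict convexity); and a label-indexed tie-breaking perturbation of the distance sum (``distance sum without ties'') playing exactly the role of your $\varepsilon(D)\,i$ term.
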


\begin{proof}
As usual, an example will be given that satisfies all the axioms mentioned in Theorem~\ref{Theo1} except for the
one under consideration. Consider the following three functions:

\begin{definition} \label{Def9}
\emph{Distance sum without ties}: $\mathbf{d^{\Sigma <}}: \mathcal{N}^n \to \mathbb{R}^n$ such that $d_i^{\Sigma <} = \sum_{j \in N} d_{ij} + (i-1) \min \left\{ \sum_{\ell \in N} \left( d_{j \ell} - d_{k \ell} \right): j,k \in N \right\} / n$ for all $i \in N$.
\end{definition}

\begin{definition} \label{Def10}
\emph{Inverse distance sum}: $\mathbf{d^{-\Sigma}}: \mathcal{N}^n \to \mathbb{R}^n$ such that $d_i^{-\Sigma} = - \sum_{j \in N} d_{ij}$ for all $i \in N$.
\end{definition}

\begin{definition} \label{Def11}
\emph{Distance product}: $\mathbf{d^\Pi}: \mathcal{N}^n \to \mathbb{R}^n$ such that $d_i^\Pi = \prod_{j \neq i} d_{ij}$ for all $i \in N$.
\end{definition}

Distance sum without ties satisfies $IDD$ and $DP$ but violates $ANO$ as in the case of equal distance sums, the node with a lower index becomes more accessible.

Inverse distance sum satisfies $ANO$ and $IDD$ but violates $DP$.

Distance product satisfies $ANO$ and $DP$ but it is not equivalent to distance sum.

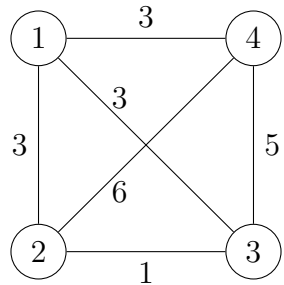
\begin{figure}[!ht]
\centering
\caption{Transportation network of Example \ref{Examp1}}
\label{Fig1}

\begin{tikzpicture}[scale=1,auto=center, transform shape, >=triangle 45]
\tikzstyle{every node}=[draw,shape=circle];
  \node (n1) at (135:2) {$1$};
  \node (n2) at (225:2) {$2$};
  \node (n3) at (315:2)  {$3$};
  \node (n4) at (45:2) {$4$};

  \path[every node/.style={}]
    (n1) edge node[pos=0.35,above] {3} (n3)
    (n2) edge node[pos=0.35,below] {6} (n4)
    (n1) edge node[above] {3} (n4)
    (n2) edge node[below] {1} (n3)
    (n1) edge node[left]  {3} (n2)
    (n3) edge node[right] {5} (n4);
\end{tikzpicture}
\end{figure}

\begin{example} \label{Examp1}
Consider the transportation network $(N,D) \in \mathcal{N}^4$ in Figure~\ref{Fig1} where
\[
\mathbf{d^\Sigma}(N,D) =
\left[
\begin{array}{cccc}
    9     & 10     & 9     & 14 \\
\end{array}
\right]^\top
\qquad
\text{and}
\qquad
\mathbf{d^\Pi}(N,D) =
\left[
\begin{array}{cccc}
    27     & 18     & 15     & 90  \\
\end{array}
\right]^\top.
\]
Note that $d_1^\Sigma(N,D) < d_2^\Sigma(N,D)$ and $d_1^\Pi(N,D) > d_2^\Pi(N,D)$.
\end{example}
\end{proof}

\begin{remark} \label{Rem2}
Theorem~\ref{Theo1} and Proposition \ref{Prop1} reveal a possible way to characterize inverse distance sum by $ANO$, $IDD$ and 'inverse' domination preservation (i.e. with an opposite implication compared to $DP$). \\
On the other hand, distance product can be axiomatized by $ANO$, a modification of $IDD$ such that the product of distances -- instead of their sum -- remains unchanged, and $DP$.
\end{remark}

\section{The suggested measure: generalized distance sum} \label{Sec3}

From the three axioms of Theorem~\ref{Theo1}, independence of distance distribution seems to be the less plausible. Therefore it is worth to examine its substitution by other considerations.
In this section a new accessibility index will be introduced and analysed with respect to its properties, especially dominance preservation.

\subsection{A new accessibility index} \label{Sec31}

$IDD$ could be eliminated by the use distance product, for instance.
However, the major disadvantage of distance sum may be that indirect connections are not taken into account, captured by the following condition. 

\begin{definition} \label{Def12}
\emph{Independence of irrelevant distances} ($IID$):
Let $(N,D) \in \mathcal{N}^n$ be a transportation network and $i,j,k,\ell \in N$ be four distinct nodes. Let $f: \mathcal{N}^n \to \mathbb{R}^n$ be an accessibility index such that $f_i(N,D) \leq f_j(N,D)$ and $(N,D') \in \mathcal{N}^n$ be a transportation network identical to $(N,D)$ except for $d'_{k \ell} \neq d_{k \ell}$. \\ 
$f$ is called \emph{independent of irrelevant distances} if $f_i(N,D') \leq f_j(N,D')$.
\end{definition}

It means that the relative accessibility of two nodes is not affected by the distances between other nodes.
Independence of irrelevant distances is an adaptation of the axiom \emph{independence of irrelevant matches} defined for (general) tournaments \citep{Rubinstein1980, Gonzalez-DiazHendrickxLohmann2013}.

\begin{lemma} \label{Lemma4}
Distance sum, distance sum without ties, inverse distance sum and distance product meet $IID$.
\end{lemma}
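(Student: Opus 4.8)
The plan is to use that each of the four indices is \emph{local}: the number it assigns to a node $i$ is determined by the $i$-th row $(d_{i1},\dots ,d_{in})$ of the distance matrix — the only exception being distance sum without ties, where the value at $i$ is such a function of row $i$ plus one scalar that is shared by every node. So I would fix four distinct nodes $i,j,k,\ell$ and let $(N,D')\in\mathcal N^n$ agree with $(N,D)$ except that $d'_{k\ell}\neq d_{k\ell}$ (and, by symmetry, $d'_{\ell k}\neq d_{\ell k}$). Because $i,j\notin\{k,\ell\}$, rows $i$ and $j$ of $D$ and $D'$ coincide, hence $\sum_{m\in N}d'_{im}=\sum_{m\in N}d_{im}$, $\prod_{m\neq i}d'_{im}=\prod_{m\neq i}d_{im}$, and likewise with $j$ in place of $i$. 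For distance sum, inverse distance sum and distance product this already finishes the proof: each reads off nothing but row $i$ (resp.\ row $j$) for its value at $i$ (resp.\ $j$), so $f_i(N,D')=f_i(N,D)$ and $f_j(N,D')=f_j(N,D)$, whence $f_i(N,D)\le f_j(N,D)$ trivially gives $f_i(N,D')\le f_j(N,D')$, i.e.\ $IID$.

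For distance sum without ties one more step is needed. Writing $d_i^{\Sigma<}(N,D)=\sum_{m\in N}d_{im}+(i-1)\,c(N,D)$ with $c(N,D)=\min\{\sum_{m\in N}(d_{pm}-d_{qm}):p,q\in N\}/n$ independent of the node, the fact that the two row sums are unchanged means that $d_i^{\Sigma<}(N,D')-d_j^{\Sigma<}(N,D')$ differs from $d_i^{\Sigma<}(N,D)-d_j^{\Sigma<}(N,D)$ only by the amount $(i-j)\bigl(c(N,D')-c(N,D)\bigr)$. I would then check that replacing $c(N,D)$ by $c(N,D')$ cannot turn a non-positive difference into a positive one, splitting on whether the unchanged row sums $\sum_m d_{im}$ and $\sum_m d_{jm}$ are equal: when they differ, the correction is a perturbation built to be too small in magnitude to override a genuine gap between row sums, so $\operatorname{sign}\bigl(d_i^{\Sigma<}-d_j^{\Sigma<}\bigr)$ follows $\operatorname{sign}\bigl(\sum_m d_{im}-\sum_m d_{jm}\bigr)$ in both networks; when they are equal, the hypothesis $d_i^{\Sigma<}(N,D)\le d_j^{\Sigma<}(N,D)$ pins down the order of the labels $i,j$ against the sign of the correction, and since $c(N,D')$ carries the same sign the inequality is inherited in $(N,D')$. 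Here I would also use that altering the single entry $d_{k\ell}$ changes only the distance sums of $k$ and $\ell$ (and by a common amount constrained by the triangle inequality), so $c$ itself changes in a very controlled way.

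The main — indeed the only — obstacle is this last step: controlling the node-independent correction term of distance sum without ties across the modification, for which one really has to appeal to how $c$ is constructed rather than to mere locality. For the other three indices $IID$ is essentially immediate once one notes that each of them, at a given node, depends on the distance matrix only through that node's row, and the modification of $d_{k\ell}$ leaves rows $i$ and $j$ intact.
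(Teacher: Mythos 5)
The paper states Lemma~\ref{Lemma4} without proof, evidently regarding it as immediate, and for distance sum, inverse distance sum and distance product your argument is exactly the right (and essentially the only) one, and it is complete: each of these indices evaluates node $i$ using only the $i$-th row of $D$, and a modification of $d_{k\ell}$ with $k,\ell\notin\{i,j\}$ leaves rows $i$ and $j$ untouched, so neither of the two values being compared changes at all.

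Your treatment of distance sum without ties, however, contains a genuine gap, and you have located it yourself: the claim that the node-independent correction $c(N,D)=\min\{d_p^\Sigma-d_q^\Sigma:\, p,q\in N\}/n$ is ``too small in magnitude to override a genuine gap between row sums'' is false for the definition as written. Since $c(N,D)=\left(\min_m d_m^\Sigma-\max_m d_m^\Sigma\right)/n$, the correction $(i-1)\,c(N,D)$ is bounded only by the \emph{global} range of the distance sums, not by the particular gap $|d_i^\Sigma-d_j^\Sigma|$; and changing $d_{k\ell}$ moves $d_k^\Sigma$ and $d_\ell^\Sigma$, hence can move $c$ by enough to reverse the comparison of $i$ and $j$. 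Concretely, take $n=4$ with $d_{12}=3$, $d_{13}=d_{14}=3.5$, $d_{23}=d_{24}=3.55$, $d_{34}=4.95$ (the triangle inequality holds); then $d_1^\Sigma=10$, $d_2^\Sigma=10.1$, $d_3^\Sigma=d_4^\Sigma=12$, $c=-1/2$, so $d_2^{\Sigma<}=9.6<10=d_1^{\Sigma<}$. Lowering $d_{34}$ to $3.05$ gives $d_3^\Sigma=d_4^\Sigma=10.1$, $c'=-0.025$, and now $d_2^{\Sigma<}=10.075>10=d_1^{\Sigma<}$: the ranking of nodes $1$ and $2$ flips although only $d_{34}$ changed, which is precisely a violation of $IID$. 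So the case analysis you sketch cannot be completed as the definition stands; it could only work if the normalising constant were something like the smallest \emph{positive} difference of distance sums divided by $n$, so that the perturbation is dominated by every nonzero gap. Your instinct that this was ``the only obstacle'' was correct, but the obstacle is substantive rather than technical.
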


\begin{definition} \label{Def13}
\emph{Shortest path-based accessibility index}:
Let $(N,D) \in \mathcal{N}^n$ be any transportation network. \\
Accessibility index $f: \mathcal{N}^n \to \mathbb{R}^n$ is called a \emph{shortest path-based} accessibility index if it satisfies independence of irrelevant distances.
\end{definition}

\begin{remark} \label{Rem3}
Distance sum, distance sum without ties, inverse distance sum and distance product are shortest path-based accessibility indices.
\end{remark}

A shortest path-based accessibility index considers only the local structure of the network that is, distances to the other nodes. They can be interpreted as shortest paths due to triangle inequality.
Nevertheless, it is not always enough to focus on shortest paths between the nodes. It can occur that a disruption of some links forces the use of other paths. They may serve as optional detours, hence influencing accessibility.

\begin{figure}[!ht]
\centering
\caption{Transportation networks of Example~\ref{Examp2}}
\label{Fig2}
  
\begin{subfigure}{0.49\textwidth}
  \centering
  \subcaption{Transportation network $(N,D)$}
  \label{Fig2a}
\begin{tikzpicture}[scale=1,auto=center, transform shape, >=triangle 45]
\tikzstyle{every node}=[draw,shape=circle];
  \node (n1) at (144:3) {$1$};
  \node (n2) at (216:3) {$2$};
  \node (n3) at (72:3)  {$3$};
  \node (n4) at (288:3) {$4$};
  \node (n5) at (0:3)   {$5$};

  \path[every node/.style={}]
    (n1) edge node[above] {1} (n3)
    (n2) edge node[below] {1} (n4)
    (n1) edge node[left] {2} (n4)
    (n2) edge node[left] {2} (n3)
    (n3) edge node[right] {3} (n5)
    (n4) edge [blue,dashed] node[right] {3} (n5)
    (n1) edge node[above] {4} (n5)
    (n2) edge node[below] {4} (n5)
    (n1) edge node[left] {2} (n2)
    (n3) edge node[right] {2} (n4);
\end{tikzpicture}
\end{subfigure}
\begin{subfigure}{0.49\textwidth}
  \centering
  \subcaption{Transportation network $(N,D')$}
  \label{Fig2b}
\begin{tikzpicture}[scale=1,auto=center, transform shape, >=triangle 45]
\tikzstyle{every node}=[draw,shape=circle];
  \node (n1) at (144:3) {$1$};
  \node (n2) at (216:3) {$2$};
  \node (n3) at (72:3)  {$3$};
  \node (n4) at (288:3) {$4$};
  \node (n5) at (0:3)   {$5$};

  \path[every node/.style={}]
    (n1) edge node[above] {1} (n3)
    (n2) edge node[below] {1} (n4)
    (n1) edge node[left] {2} (n4)
    (n2) edge node[left] {2} (n3)
    (n3) edge node[right] {3} (n5)
    (n4) edge [blue,dashed] node[right] {4} (n5)
    (n1) edge node[above] {4} (n5)
    (n2) edge node[below] {4} (n5)
    (n1) edge node[left] {2} (n2)
    (n3) edge node[right] {2} (n4);
\end{tikzpicture}
\end{subfigure}
\end{figure}
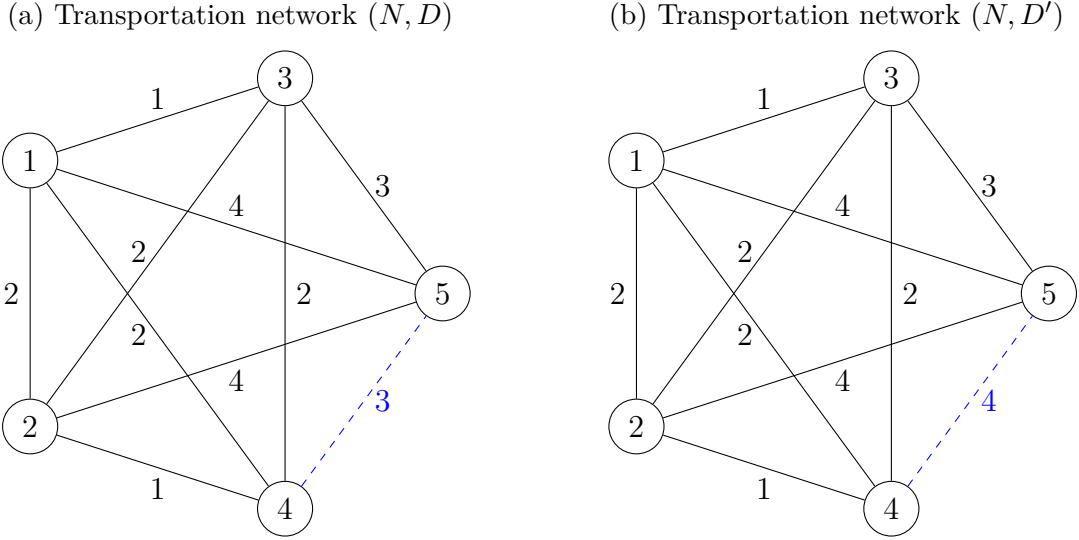

\begin{example} \label{Examp2}
Consider the transportation networks $(N,D), (N,D') \in \mathcal{N}^5$ in Figure~\ref{Fig2}.
$D'$ is the same as $D$ except for $d_{45}' = 4 > 3 = d_{45}$. 
Therefore $d_1^\Sigma(N,D) = d_2^\Sigma(N,D)$ and $d_1^\Sigma(N,D) = d_2^\Sigma(N,D)$ as well as $d_1^\Pi(N,D) = d_2^\Pi(N,D)$ and $d_1^\Pi(N,D) = d_2^\Pi(N,D)$
\end{example}

We think that -- together with their distance -- the accessibility of other nodes also count: in the case of two nodes with the same distance sum, it is better to be close to more accessible nodes (and therefore, to be far to less accessible ones) than vice versa.

\begin{notation} \label{Not4}
Let $(N,D) \in \mathcal{N}^n$ be a transportation network. Matrix $A = \left( a_{ij} \right) \in \mathbb{R}^{n \times n}$ is given by $a_{ij} = d_{ij} / d_i^\Sigma(N,D)$ for all $i \neq j$ and $a_{ii} = -\sum_{j \neq i} d_{ij} / d_j^\Sigma(N,D)$ for all $i \in N$.
\end{notation}

The sum of a column of $A$ is zero. The sum of a row of $A$ without the diagonal element is one.

\begin{definition} \label{Def14}
\emph{Generalized distance sum}:
$\mathbf{x}(\alpha): \mathcal{N}^n \to \mathbb{R}^n$ such that $(I+ \alpha A) \mathbf{x}(\alpha) = \mathbf{d^\Sigma}$, where $\alpha > 0$ is a parameter.
\end{definition}

\begin{remark} \label{Rem4}
Generalized distance sum can be expressed in the following form for all $i \in N$:
\[
\left( 1- \alpha \sum_{j \neq i} \frac{d_{ij}}{d_j^\Sigma} \right) x_i(\alpha) = \left( d_i^\Sigma - \alpha \sum_{j \neq i} \frac{d_{ij}}{d_i^\Sigma} x_j(\alpha) \right).
\]
\end{remark}

\begin{lemma} \label{Lemma5}
Generalized distance sum does not meet independence of irrelevant distances.
\end{lemma}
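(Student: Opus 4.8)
The plan is to exhibit a counterexample, for which the pair of networks $(N,D)$, $(N,D')$ of Example~\ref{Examp2} (Figure~\ref{Fig2}) is tailor-made. First I would note that in $(N,D)$ the transposition $\sigma = (1\,2)(3\,4)$, fixing node $5$, is an automorphism: a direct check of the ten edge values gives $d_{\sigma(p)\sigma(q)} = d_{pq}$ for every pair $p,q$, and in passing one reads off $d_1^\Sigma = d_2^\Sigma = 9$, $d_3^\Sigma = d_4^\Sigma = 8$, $d_5^\Sigma = 14$. Since the matrix $A$ of Notation~\ref{Not4} and the vector $\mathbf{d^\Sigma}$ depend on $D$ alone, the permutation matrix $P$ of $\sigma$ satisfies $PAP^{-1} = A$ and $P\mathbf{d^\Sigma} = \mathbf{d^\Sigma}$; applying $P$ to the defining equation $(I + \alpha A)\mathbf{x}(\alpha) = \mathbf{d^\Sigma}$ and using uniqueness of the solution --- valid whenever $I + \alpha A$ is invertible, in particular for all small $\alpha > 0$ by continuity of $\det(I+\alpha A)$ --- yields $P\mathbf{x}(\alpha) = \mathbf{x}(\alpha)$, hence $x_1(\alpha) = x_2(\alpha)$. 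In particular $f_2(N,D) \le f_1(N,D)$.

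Next I would pass to $(N,D')$, which differs from $(N,D)$ only in $d_{45}' = 4 \ne 3 = d_{45}$: a change of one distance between two nodes both distinct from $1$ and $2$, so it is a legitimate perturbation for the $IID$ instance $(i,j,k,\ell) = (2,1,4,5)$. Now $d_1^\Sigma = d_2^\Sigma = 9$, but $d_3^\Sigma = 8 \ne 9 = d_4^\Sigma$ and $d_5^\Sigma = 15$, so $\sigma$ is no longer an automorphism and the symmetry argument collapses. The substantive step is to confirm that generalized distance sum nevertheless \emph{separates} $1$ and $2$ in $(N,D')$, i.e.\ $x_1(\alpha) \ne x_2(\alpha)$. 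For this I would expand $\mathbf{x}(\alpha) = (I+\alpha A)^{-1}\mathbf{d^\Sigma} = \mathbf{d^\Sigma} - \alpha A\mathbf{d^\Sigma} + O(\alpha^2)$ for small $\alpha$; because $d_1^\Sigma = d_2^\Sigma$ the leading terms cancel, leaving $x_1(\alpha) - x_2(\alpha) = -\alpha\bigl[(A\mathbf{d^\Sigma})_1 - (A\mathbf{d^\Sigma})_2\bigr] + O(\alpha^2)$, where $(A\mathbf{d^\Sigma})_p = \sum_{q\ne p}\frac{d_{pq}}{d_p^\Sigma}d_q^\Sigma - d_p^\Sigma\sum_{q\ne p}\frac{d_{pq}}{d_q^\Sigma}$ by Notation~\ref{Not4}. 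Substituting the distances of $(N,D')$ gives $(A\mathbf{d^\Sigma})_1 - (A\mathbf{d^\Sigma})_2 = \tfrac19 + \tfrac18 = \tfrac{17}{72} \ne 0$; hence $x_1(\alpha) < x_2(\alpha)$ for all sufficiently small $\alpha > 0$ (and, since $x_1(\alpha) - x_2(\alpha)$ is a non-vanishing rational function of $\alpha$, for all but finitely many $\alpha > 0$).

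Finally I would assemble the contradiction: for $(i,j,k,\ell) = (2,1,4,5)$ we have $f_2(N,D) \le f_1(N,D)$ but $f_2(N,D') > f_1(N,D')$, which is exactly the failure of independence of irrelevant distances. The main obstacle, I expect, is the second step: one must verify that the perturbation breaks the tie \emph{strictly} rather than preserving $x_1 = x_2$ through an accidental cancellation --- this is precisely what the first-order computation rules out, and it also pins down the direction of the gap (being adjacent at distance $1$ to the more accessible node $3$ makes node $1$ strictly beat node $2$, matching the motivation preceding Notation~\ref{Not4}). Everything else --- counting the entries of $D$ and $D'$, checking triangle inequality for $D'$, and noting $I + \alpha A$ is invertible for the $\alpha$ at hand --- is routine bookkeeping.
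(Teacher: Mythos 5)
Your proof is correct and uses exactly the counterexample the paper itself relies on (Example~\ref{Examp2}/Example~\ref{Examp3}): the automorphism $(1\,2)(3\,4)$ forces $x_1(\alpha)=x_2(\alpha)$ in $(N,D)$, while the single change of $d_{45}$ separates nodes $1$ and $2$ in $(N,D')$, violating $IID$ for the instance $(i,j,k,\ell)=(2,1,4,5)$. The only difference is that you establish the strict separation analytically via the first-order term $-\alpha\bigl[(A\mathbf{d^\Sigma})_1-(A\mathbf{d^\Sigma})_2\bigr]=-\tfrac{17}{72}\,\alpha$ (which checks out against the data of Figure~\ref{Fig2}), whereas the paper reads the inequality off the numerically computed curves of Figure~\ref{Fig3}; your version is, if anything, the more rigorous of the two.
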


\input{Figure3}

\begin{proof}
\begin{example} \label{Examp3}
Consider the transportation networks $(N,D), (N,D') \in \mathcal{N}^5$ in Figure~\ref{Fig2}. 
Generalized distance sums with various values of $\alpha$ are given in Figure~\ref{Fig3}.

Nodes $1$ and $2$ as well as $3$ and $4$ are symmetric in $(N,D)$, while $2$ and $4$ are symmetric in $(N,D')$, they have the same generalized distance sum for any $\alpha$. Generalized distance sums monotonically decrease with the exception of node $5$ (see Figures~\ref{Fig3a} and \ref{Fig3b}, where the numbers on the curves indicate the corresponding node).
According to Figure~\ref{Fig3d}, it can be achieved by increasing $\alpha$ that $x_1(\alpha) < x_3(\alpha)$ (node $3$ is less accessible than node $1$) both in $(N,D)$ and $(N,D')$.

The aim was to demonstrate that node $1$ is more accessible than node $2$ in $(N,D')$ since they have the same distance sum but the former is closer to node $3$ than the latter, and node $3$ seems to be more accessible than node $4$ as they were symmetric in $(N,D)$ and $d_{45}$ has increased. This relation holds if $\alpha$ is not too large (see Figures~\ref{Fig3c} and \ref{Fig3d}).
\end{example}
\end{proof}

Some basic attributes of generalized distance sum are listed below.

\begin{proposition} \label{Prop2}
Generalized distance sum satisfies the following properties for any fixed parameter $0 < \alpha < -1 / \min \left\{ \lambda: \lambda A = A \mathbf{y} \right\}$ and transportation network $(N,D) \in \mathcal{N}$:
\begin{enumerate}
\item
\emph{Existence and uniqueness}: a unique vector of $\mathbf{x}(\alpha)$ exists;

\item
\emph{Anonymity} ($ANO$);

\item
\emph{Homogeneity} ($HOM$): the relation
\[
x_i(\alpha)(N,D) \leq x_i(\alpha)(N,D) \Rightarrow x_i(\alpha)(N,\beta D) \leq x_i(\alpha)(N,\beta D)
\]
holds for all $i,j \in N$ and $\beta > 0$;

\item
\emph{Distance sum conservation}: $\sum_{i \in N} x_i(\alpha) = \sum_{i \in N} d_i^\Sigma$;

\item
\emph{Agreement}: $\lim_{\alpha \to 0} \mathbf{x}(\alpha) = \mathbf{d^\Sigma}$;

\item
\emph{Flatness preservation} ($FP$): if $d_i^\Sigma = d_j^\Sigma$ for all $i,j \in N$, then $x_i(\alpha) = x_j(\alpha)$ for all $i,j \in N$.
\end{enumerate}

\end{proposition}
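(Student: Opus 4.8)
The plan is to reduce every item to elementary linear algebra applied to the defining identity $(I+\alpha A)\mathbf{x}(\alpha)=\mathbf{d^\Sigma}$, using the two structural facts already recorded after Notation~\ref{Not4}: every column of $A$ sums to zero, i.e. $\mathbf{e}^\top A=\mathbf{0}^\top$, and the entries $a_{ij}=d_{ij}/d_i^\Sigma$ are ratios of distances, hence unchanged when $D$ is rescaled by a positive scalar.

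I would begin with \emph{existence and uniqueness}, since items (2) and (6) will invoke it. The system has a unique solution precisely when $I+\alpha A$ is nonsingular, and $\det(I+\alpha A)=\prod_\ell(1+\alpha\lambda_\ell)$, the product running over the eigenvalues $\lambda_\ell$ of $A$ counted with algebraic multiplicity. The stated upper bound on $\alpha$ is exactly what makes every factor nonzero: for a real eigenvalue $\lambda_\ell$ one has $\lambda_\ell\ge\lambda_{\min}$, hence $\alpha\lambda_\ell>-1$, so $1+\alpha\lambda_\ell>0$; for a non-real eigenvalue the real quantity $\alpha\lambda_\ell$ cannot equal $-1$. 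In either case $1+\alpha\lambda_\ell\neq0$, so $I+\alpha A$ is invertible and $\mathbf{x}(\alpha)=(I+\alpha A)^{-1}\mathbf{d^\Sigma}$ is the unique solution. I expect this bookkeeping over the (possibly complex, because $A$ is not symmetric) spectrum of $A$ to be the only genuinely delicate point; each remaining item is a short consequence.

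For \emph{anonymity}, write the relabelling as conjugation by a permutation matrix $P$: the network $(\sigma N,\sigma D)$ has distance matrix $PDP^\top$, distance-sum vector $P\mathbf{d^\Sigma}$, and companion matrix $PAP^\top$. Since $P^\top P=I$, one checks $(I+\alpha PAP^\top)P=P(I+\alpha A)$, so $P\mathbf{x}(\alpha)$ solves the permuted system; by uniqueness $x_{\sigma i}(\alpha)(\sigma N,\sigma D)=x_i(\alpha)(N,D)$. For \emph{homogeneity}, replacing $D$ by $\beta D$ leaves $A$ unchanged and multiplies $\mathbf{d^\Sigma}$ by $\beta$, so $\mathbf{x}(\alpha)(N,\beta D)=\beta\,\mathbf{x}(\alpha)(N,D)$, and multiplying by $\beta>0$ preserves the ordering. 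For \emph{distance sum conservation}, left-multiply the defining identity by $\mathbf{e}^\top$ and use $\mathbf{e}^\top(I+\alpha A)=\mathbf{e}^\top$ to obtain $\mathbf{e}^\top\mathbf{x}(\alpha)=\mathbf{e}^\top\mathbf{d^\Sigma}$. For \emph{agreement}, $\alpha\mapsto(I+\alpha A)^{-1}$ is continuous at $\alpha=0$, where it equals $I$, whence $\mathbf{x}(\alpha)\to\mathbf{d^\Sigma}$.

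Finally, for \emph{flatness preservation}, assume $d_i^\Sigma=c$ for all $i$, so $\mathbf{d^\Sigma}=c\mathbf{e}$. The $i$-th component of $A\mathbf{e}$ equals $a_{ii}+\sum_{j\neq i}a_{ij}=-\sum_{j\neq i}d_{ij}/d_j^\Sigma+\sum_{j\neq i}d_{ij}/d_i^\Sigma$, which vanishes once all $d_j^\Sigma$ coincide; hence $A\mathbf{e}=\mathbf{0}$ and $(I+\alpha A)(c\mathbf{e})=c\mathbf{e}=\mathbf{d^\Sigma}$. By the uniqueness from item (1), $\mathbf{x}(\alpha)=c\mathbf{e}$, so $x_i(\alpha)=x_j(\alpha)$ for all $i,j\in N$. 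Collecting the six items completes the proof.
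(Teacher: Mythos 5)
Your proof is correct, and for items (2)--(6) it follows essentially the same route as the paper: direct verification against the defining system $(I+\alpha A)\mathbf{x}(\alpha)=\mathbf{d^\Sigma}$, using $\mathbf{e}^\top A=\mathbf{0}^\top$ for conservation, the scale-invariance of $A$ for homogeneity, and the observation that $c\mathbf{e}$ solves the system when all distance sums equal $c$ for flatness preservation (your explicit check that $A\mathbf{e}=\mathbf{0}$ in that case is exactly the computation the paper leaves implicit).

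The one place where you genuinely diverge is existence and uniqueness. The paper disposes of it in a single line by asserting that $I+\alpha A$ is \emph{positive definite} under the stated bound on $\alpha$ -- a claim that is delicate for a non-symmetric matrix, since positive definiteness in the quadratic-form sense is governed by the symmetric part of $I+\alpha A$, not by the eigenvalues of $A$ appearing in the hypothesis. Your argument instead factors $\det(I+\alpha A)=\prod_\ell(1+\alpha\lambda_\ell)$ over the (possibly complex) spectrum and checks each factor is nonzero: real eigenvalues are handled by $\alpha\lambda_\ell\ge\alpha\lambda_{\min}>-1$, and non-real ones by the observation that $\alpha\lambda_\ell$ cannot be the real number $-1$. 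This is the more robust reading of the hypothesis $0<\alpha<-1/\min\{\lambda:\lambda\mathbf{y}=A\mathbf{y}\}$ (which only constrains real eigenvalues), and it buys you invertibility -- which is all that the remaining items actually need -- without having to justify a definiteness claim. The only caveat is that, as in the paper, the minimum is implicitly taken over the real eigenvalues of $A$ and assumed negative so that the upper bound is positive; you inherit that ambiguity from the statement rather than introduce it.
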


\begin{proof}
We prove the statements above in the corresponding order.
\begin{enumerate}
\item
Matrix $(I + \alpha A)$ is positive definite if $0 < \alpha < -1 / \min \left\{ \lambda: \lambda A = A \mathbf{y} \right\}$.

\item
Generalized distance sum is invariant under isomorphism, it depends just on the structure of the transportation network and not on the labelling of the nodes.

\item
The identities $\mathbf{d^\Sigma}(N,\beta D) = \beta \mathbf{d^\Sigma}(N,D)$ and $A(N,\beta D) = A(N,D)$ imply \linebreak $\mathbf{x}(\alpha)(N,\beta D) = \beta \mathbf{x}(\alpha)(N,D)$.

\item
$\sum_{i \in N} x_i(\alpha) = \sum_{i \in N} \left[ x_i(\alpha) + \sum_{j \in N} a_{ij} x_j(\alpha) \right] = \sum_{i \in N} d_i^\Sigma$ since $\sum_{i \in N} a_{ij} = 0$ for all $j \in N$.

\item
$\lim_{\alpha \to 0} \alpha A \mathbf{x}(\alpha) = \mathbf{0}$.

\item
$x_i(\alpha) = d_i^\Sigma$ satisfies $(I+ \alpha A) \mathbf{x}(\alpha) = \mathbf{d^\Sigma}$ in the case of $d_i^\Sigma = d_j^\Sigma$ for all $i,j \in N$. \\
\end{enumerate}
\end{proof}

Regarding $ANO$, it remains an interesting question whether the opposite of symmetry is valid, i.e., two nodes have the same generalized distance sum for any $\alpha > 0$ only if they are symmetric.

Homogeneity means that the accessibility ranking is invariant under the multiplication of all distances by a positive scalar (i.e. the choice of measurement scale).
Note that triangle inequality is preserved when distances are multiplied by a positive scalar.

The name of this accessibility index comes from the property distance sum conservation: it can be interpreted as a way to redistribute the sum of distances among the nodes. This axiom contributes to the comparability of calculated accessibilities across different networks.
According to agreement, its limit is distance sum.

$FP$ requires that generalized distance sum results in a tied accessibility between any nodes if distance sum also gives this result. The other direction (whether the flatness of generalized distance for a given $\alpha$ implies the flatness of distance sum) requires further research.

\begin{remark} \label{Rem5}
Distance sum also satisfies the properties listed in Proposition~\ref{Prop2} (with the obvious exception of agreement).
Distance product does not meet distance sum conservation and flatness preservation.
\end{remark}


Triangle inequality is essential for an accessibility index satisfying $FP$.

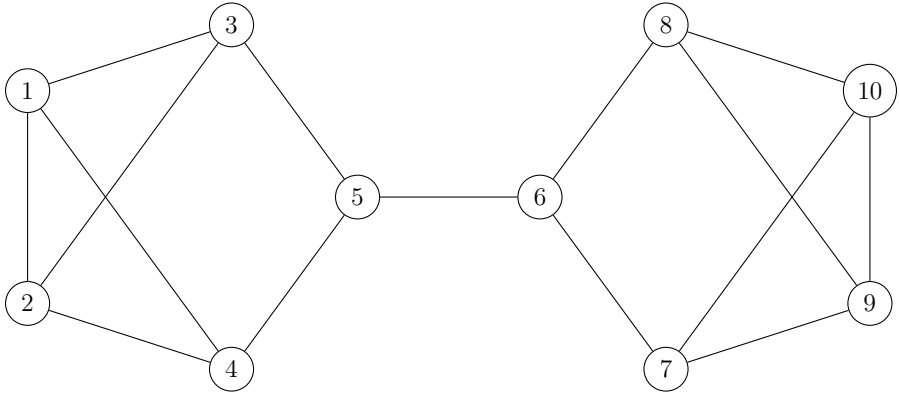
\begin{figure}[!ht]
\centering
\caption{Transportation network of Example~\ref{Examp4}}
\label{Fig4}

\begin{tikzpicture}[scale=0.8, auto=center, transform shape, >=triangle 45]
\tikzstyle{every node}=[draw,shape=circle];
  \node (n1) at (144:3) {$1$};
  \node (n2) at (216:3) {$2$};
  \node (n3) at (72:3)  {$3$};
  \node (n4) at (288:3) {$4$};
  \node (n5) at (0:3)   {$5$};

  \foreach \from/ \to in {n1/n2,n1/n3,n1/n4,n2/n3,n2/n4,n3/n5,n4/n5}
    \draw (\from) -- (\to);
 
\begin{scope}[xshift=9cm]
  \node (m1) at (36:3)  {$10$};
  \node (m2) at (324:3) {$9$};
  \node (m3) at (108:3) {$8$};
  \node (m4) at (252:3) {$7$};
  \node (m5) at (180:3) {$6$};

  \foreach \from/ \to in {m1/m2,m1/m3,m1/m4,m2/m3,m2/m4,m3/m5,m4/m5}
    \draw (\from) -- (\to);
  
  \draw (n5) -- (m5);
\end{scope}
\end{tikzpicture}
\end{figure}

\begin{example} \label{Examp4}
Consider the transportation network $(N,D) \in \mathcal{N}^{10}$ in Figure~\ref{Fig4} such that
\begin{itemize}
\item
$d_{ij} = 1$ if and only if nodes $i$ and $j$ are connected;
\item
$d_{ij} = 10$ if and only if nodes $i$ and $j$ are \emph{not} connected.
\end{itemize}
Then $d_i^\Sigma = 3 \times 1 + 6 \times 10 = 63$ for all $i \in N$, but nodes $5$ and $6$ seems to be more accessible than the others. It is the case until $d_{ij} > 2$ in the case of not connected nodes $i$ and $j$ since then there exists more shortcuts for nodes $5$ and $6$ than for the others.
\end{example}

It can be seen (e.g. from Example~\ref{Fig3}) that generalized distance sum is better able to distinguish the accessibility of nodes than distance sum. In other words, there are far less nodes of equal generalized distance sum than there are for standard distance sum. It may be important in some applications \citep{LindelaufHamersHusslage2013}.

\subsection{Connection to domination preservation} \label{Sec32}

Generalized distance sum was constructed in order to eliminate independence of distance distribution. Proposition~\ref{Prop2} shows that anonymity is preserved in the process.
Now we turn to the third axiom in the characterization of distance sum by Theorem~\ref{Theo1}, and investigate whether the suggested accessibility index meets domination preservation.

\begin{notation} \label{Not5}
$\min_k \{ S \}$ and $\max_k \{ S \}$ are the $k$th smallest and largest element of a set $S$, respectively.
\end{notation}

\begin{theorem} \label{Theo2}
Generalized distance sum satisfies $DP$ if
\[
\alpha < \min \left\{ \frac{\min_2 \left\{ d_i^\Sigma: i \in N \right\} }{\max \left\{ x_i(\alpha): i \in N \right\} };\, \frac{\min \left\{ d_i^\Sigma: i \in N \right\} }{\max \left\{ d_i^\Sigma: i \in N \right\} } \right\}.
\]
\end{theorem}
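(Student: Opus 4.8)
The plan is to argue from the coordinatewise form of the defining system recorded in Remark~\ref{Rem4}. Writing $c_\ell = 1 - \alpha\sum_{k\neq\ell} d_{\ell k}/d_k^\Sigma$ and $w_\ell = \big(\sum_{k\neq\ell} d_{\ell k}\, x_k(\alpha)\big)/d_\ell^\Sigma$, that relation reads $c_\ell\, x_\ell(\alpha) = d_\ell^\Sigma - \alpha w_\ell$ for every $\ell\in N$. The key structural observation is that $w_\ell$ is a convex combination of $\{x_k(\alpha) : k\neq\ell\}$, since the weights $d_{\ell k}/d_\ell^\Sigma$ are nonnegative and sum to one; hence $w_\ell \le \max\{x_k(\alpha) : k\in N\} =: M$ for every $\ell$.

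First I would record what the two entries of the minimum buy. From $\sum_{k\neq\ell} d_{\ell k}/d_k^\Sigma \le d_\ell^\Sigma/\min\{d_k^\Sigma : k\in N\} \le \max\{d_k^\Sigma : k\in N\}/\min\{d_k^\Sigma : k\in N\}$, the second entry forces $c_\ell > 0$ for every $\ell$ (which in particular keeps $\mathbf{x}(\alpha)$ well defined here). The first entry gives $\alpha M < \min_2\{d_k^\Sigma : k\in N\}$, hence $c_\ell\, x_\ell(\alpha) = d_\ell^\Sigma - \alpha w_\ell \ge d_\ell^\Sigma - \alpha M > 0$, and therefore $x_\ell(\alpha) > 0$, for every $\ell$ with $d_\ell^\Sigma \ge \min_2\{d_k^\Sigma : k\in N\}$. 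Next, from the hypothesis that $i$ dominates $j$: summing $d_{ik}\le d_{jk}$ over $k\in N\setminus\{i,j\}$ (with at least one strict inequality) gives $d_i^\Sigma < d_j^\Sigma$; in particular $d_j^\Sigma \ge \min_2\{d_k^\Sigma : k\in N\}$, so $x_j(\alpha) > 0$ by the previous line. A one-line rearrangement also yields $c_i - c_j = \alpha\big[\,d_{ij}(1/d_i^\Sigma - 1/d_j^\Sigma) + \sum_{k\neq i,j}(d_{jk}-d_{ik})/d_k^\Sigma\,\big] \ge 0$, so $c_i \ge c_j > 0$.

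Now suppose, for contradiction, that $x_i(\alpha) \ge x_j(\alpha)$. Then $x_i(\alpha) > 0$ as well, and combining $c_i \ge c_j > 0$ with $x_i(\alpha) \ge x_j(\alpha) > 0$ gives $c_i x_i(\alpha) \ge c_j x_j(\alpha)$; substituting the reformulation turns this into $\alpha(w_j - w_i) \ge d_j^\Sigma - d_i^\Sigma > 0$ (in fact $\alpha(w_j-w_i) \ge (d_j^\Sigma - d_i^\Sigma) + c_i\big(x_i(\alpha)-x_j(\alpha)\big)$, which already rules out $x_i(\alpha)=x_j(\alpha)$). The remaining work is to derive a contradictory upper bound on $w_j - w_i$. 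Splitting off the $k\in\{i,j\}$ terms, $w_j - w_i = \big(d_{ij}x_i(\alpha) + \sigma_j\big)/d_j^\Sigma - \big(d_{ij}x_j(\alpha) + \sigma_i\big)/d_i^\Sigma$ with $\sigma_\ell = \sum_{k\neq i,j} d_{\ell k}\, x_k(\alpha)$, and one bounds this from above using $d_i^\Sigma \le d_j^\Sigma$, the nonnegativity of the $x_k(\alpha)$ established above, the inequalities $d_{jk}\ge d_{ik}$, the identity $\sum_{k\neq i,j}(d_{jk}-d_{ik}) = d_j^\Sigma - d_i^\Sigma$, and the a priori estimates $x_k(\alpha)\le M$ and $x_j(\alpha)\ge d_j^\Sigma - \alpha M$ (the latter from $c_j\le 1$ and $w_j\le M$). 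The objective is to combine these so that $\alpha M < \min_2\{d_k^\Sigma : k\in N\}\le d_j^\Sigma$ forces $\alpha(w_j-w_i)$ to be strictly smaller than $d_j^\Sigma - d_i^\Sigma$, contradicting the displayed inequality.

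I expect this last estimate to be the main obstacle and the place where the exact shape of the threshold gets pinned down. The awkward point is that $x_i(\alpha)$ and $x_j(\alpha)$ enter $w_j$ and $w_i$ respectively through the same weight $d_{ij}$ but with the different normalisations $1/d_j^\Sigma$ and $1/d_i^\Sigma$, so those contributions do not simply cancel; one has to arrange that the factor $\alpha M/d_j^\Sigma < 1$ supplied by the first entry of the minimum absorbs the leading term $M(d_j^\Sigma - d_i^\Sigma)/d_j^\Sigma$ while the residual $d_{ij}$-term stays controlled via the bounds on $x_i(\alpha)$ and $x_j(\alpha)$, and this balancing is exactly why the \emph{second}-smallest distance sum, rather than the smallest, must appear. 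Everything preceding that step — the convexity of the $w_\ell$, positivity of the $c_\ell$ and of the relevant $x_\ell(\alpha)$, the inequality $d_i^\Sigma < d_j^\Sigma$, and the sign of $c_i - c_j$ — is routine.
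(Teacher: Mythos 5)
Your setup coincides with the paper's: you start from the same coordinatewise identity of Remark~\ref{Rem4}, and you assign the two entries of the minimum exactly the roles the paper gives them (the second entry makes every $c_\ell$ positive, the first keeps the ``contribution'' of each distance positive). The preliminary facts you establish --- $w_\ell\le M$, $c_\ell>0$, positivity of $x_\ell(\alpha)$ for every $\ell$ with $d_\ell^\Sigma\ge\min_2\{d_k^\Sigma\}$, $d_i^\Sigma<d_j^\Sigma$, and $c_i\ge c_j$ --- are all correct. But the proof is not finished: the entire content of the theorem sits in the step you yourself label ``the remaining work,'' namely showing that $\alpha(w_j-w_i)<d_j^\Sigma-d_i^\Sigma$ under the stated bound on $\alpha$. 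You list the ingredients you intend to use and then write ``the objective is to combine these so that\dots'' and ``I expect this last estimate to be the main obstacle.'' A plan for an estimate is not an estimate, so as it stands the proposal establishes only the routine preliminaries.

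The obstacle you flag is moreover real, and the listed ingredients do not obviously overcome it. Carrying out your splitting and using $d_i^\Sigma\le d_j^\Sigma$, $x_k(\alpha)\le M$ and $\sum_{k\ne i,j}(d_{jk}-d_{ik})=d_j^\Sigma-d_i^\Sigma$ gives
\[
\alpha\bigl(w_j-w_i\bigr)\;\le\;\frac{\alpha M}{d_j^\Sigma}\bigl(d_j^\Sigma-d_i^\Sigma\bigr)+\frac{\alpha d_{ij}}{d_j^\Sigma}\bigl(x_i(\alpha)-x_j(\alpha)\bigr),
\]
and comparing this with your lower bound $\alpha(w_j-w_i)\ge\bigl(d_j^\Sigma-d_i^\Sigma\bigr)+c_i\bigl(x_i(\alpha)-x_j(\alpha)\bigr)$ kills the case $x_i(\alpha)=x_j(\alpha)$ but, for $x_i(\alpha)>x_j(\alpha)$, leaves you needing $c_i\ge\alpha d_{ij}/d_j^\Sigma$; that inequality does not follow from the two conditions in the theorem (it would be implied by something like $\alpha<\min\{d_k^\Sigma\}/(2\max\{d_k^\Sigma\})$, which is stronger). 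The paper does not argue by contradiction on $w_j-w_i$ at all: it compares the two representations $x_\ell(\alpha)=c_\ell^{-1}\sum_{k\ne\ell}d_{\ell k}\bigl(1-\alpha x_k(\alpha)/d_\ell^\Sigma\bigr)$ directly, term by term, using the first condition to make each bracket for node $j$ positive, $d_{ik}\le d_{jk}$ together with $1/d_i^\Sigma\ge1/d_j^\Sigma$ to dominate the summands, and $0<c_j\le c_i$ to handle the prefactors. To salvage your route you must actually produce the missing bound, and in particular you must control the cross terms in $d_{ij}$ --- the very point you identify as awkward --- rather than defer them.
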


\begin{proof}
Consider a transportation network $(N,D) \in \mathcal{N}$ and two distinct nodes $i,j \in N$ such that $d_{ik} \leq d_{jk}$ for all $k \in N \setminus \{ i,j \}$ with a strict inequality ($<$) for at least one $k$. Then
\[
x_i(\alpha) = \left( 1- \alpha \sum_{k \neq i} \frac{d_{ik}}{d_k^\Sigma} \right)^{-1} \left[ \sum_{k \neq i} d_{ik} \left( 1 - \alpha \frac{x_k(\alpha)}{d_i^\Sigma} \right) \right] \quad \text{and}
\]
\[
x_j(\alpha) = \left( 1- \alpha \sum_{k \neq j} \frac{d_{jk}}{d_k^\Sigma} \right)^{-1} \left[ \sum_{k \neq j} d_{jk} \left( 1 - \alpha \frac{x_k(\alpha)}{d_j^\Sigma} \right) \right].
\]
The following two conditions are sufficient for $x_i(\alpha) < x_j(\alpha)$:
\begin{itemize}
\item
$\alpha x_k(\alpha) / d_j^\Sigma < 1$ for any $k \in N \setminus \{ i,j \}$, that is, $\alpha < d_j^\Sigma / x_k(\alpha)$, which provides that the 'contribution' of $d_{jk}$ to $x_j(\alpha)$ is positive;

\item
$\alpha \sum_{k \neq i} d_{ik} / d_k^\Sigma \leq \alpha \sum_{k \neq j} d_{jk} / d_k^\Sigma \leq \alpha d_j^\Sigma / \min \left\{ d_\ell^\Sigma: \ell \in N \right\} < 1$, namely, $\alpha < \min \left\{ d_\ell^\Sigma: \ell \in N \right\} / d_j^\Sigma$. 
\end{itemize}
Note also that $d_j^\Sigma \geq \min_2 \left\{ d_\ell^\Sigma: \ell \in N \right\}$.
\end{proof}

\begin{remark} \label{Rem6}
According to the proof of Theorem~\ref{Theo2}, the condition
\[
\alpha < \min \left\{ d_i^\Sigma: i \in N \right\} / \max \left\{ x_i(\alpha): i \in N \right\}
\]
provides that the 'contribution' of any distance to any node's generalized distance sum is positive, therefore it is positive, too (look at Figure~\ref{Fig3d}).
\end{remark}

\begin{remark} \label{Rem7}
Due to the continuity of $\mathbf{x}(\alpha)$, $\min_2 \left\{ d_i^\Sigma: i \in N \right\} > \min \left\{ d_i^\Sigma: i \in N \right\}$ implies
\[
\frac{\min_2 \left\{ d_i^\Sigma: i \in N \right\} }{\max \left\{ x_i(\alpha): i \in N \right\} } > \frac{\min \left\{ d_i^\Sigma: i \in N \right\} }{\max \left\{ d_i^\Sigma: i \in N \right\} },
\]
for small $\alpha$, so the second expression is effective for small $\alpha$. Nevertheless, usually the first expression limits the value of the parameter $\alpha$.
\end{remark}

In a certain sense, the result of Theorem~\ref{Theo2} is not surprising because distance sum meets $DP$ and $\mathbf{x}(\alpha)$ is 'close' to it when $\alpha$ is small.\footnote{~There always exists an appropriately small $\alpha$ satisfying the condition of Theorem~\ref{Theo2}.}
The main contribution is the calculation of a sufficient condition. It is not elegant since it depends on $\mathbf{x}(\alpha)$, which excludes to check the requirement before the calculation,. However, ex post check is not difficult.

\begin{definition} \label{Def15}
\emph{Reasonable upper bound}:
Parameter $\alpha$ is \emph{reasonable} if it satisfies the condition of Theorem~\ref{Theo2} for a transportation network $(N,D) \in \mathcal{N}$. The largest $\alpha$ with this property is the \emph{reasonable upper bound} of the parameter.
\end{definition}

\begin{notation} \label{Not6}
The reasonable upper bound of generalized distance sum's parameter is $\hat{\alpha}$.
\end{notation}

\begin{remark} \label{Rem8}
Generalized degree satisfies $DP$ for any reasonable $\alpha > 0$.
\end{remark}

It is analogous to the (dynamic) monotonicity of a preference aggregation method called generalized row sum  \citep[Property~13]{Chebotarev1994} as well as to the adding rank monotonicity of a centrality measure generalized degree \citep{Csato2015c}.

Violation of domination preservation may be a problem in practice.

\begin{example} \label{Examp5}
Consider the transportation network $(N,D') \in \mathcal{N}^5$ in Figure~\ref{Fig2}, where node $3$ dominates node $1$ since $d_{3k} \leq d_{1k}$ for $k = 2,4$ and $d_{35} < d_{15}$.
However, $x_3(0.4) > x_1(0.4)$ according to Figure~\ref{Fig3d}.

Theorem~\ref{Theo2} gives the reasonable upper bound as $\hat{\alpha} \approx 0.2686$ from
\[
\alpha < \frac{\min_2 \left\{ d_i^\Sigma: i \in N \right\} }{\max \left\{ x_i(\alpha): i \in N \right\} }.
\]
Here $\min \left\{ d_i^\Sigma: i \in N \right\} / \max \left\{ d_i^\Sigma: i \in N \right\} = 8/15$.
However, Theorem~\ref{Theo2} does not give a necessary condition for $DP$ since, for example, $x_3(0.36) < x_1(0.36)$ in Figure~\ref{Fig3d}.
\end{example}

Instead of domination preservation, which is a kind of 'static' axiom, one can focus on the 'dynamic' monotonicity properties of accessibility indices, similarly to independence of distance distribution of independence or irrelevant distances. This direction is followed by \citet{Sabidussi1966} or \citet{Chienetal2004} in the case of centrality measures.
Then a plausible condition is the following.

\begin{definition} \label{Def16}
\emph{Positive responsiveness to distances} ($PRD$):
Let $(N,D) \in \mathcal{N}^n$ be a transportation network and $i,j,k \in N$ be three distinct nodes.
Let $f: \mathcal{N}^n \to \mathbb{R}^n$ be an accessibility index such that $f_i(N,D) \leq f_j(N,D)$ and $(N,D') \in \mathcal{N}^n$ be a transportation network identical to $(N,D)$ except for $d'_{jk} > d_{jk}$. \\
$f$ is called \emph{positively responsive to distances} if $f_i(N,D) < f_j(N,D)$.
\end{definition}

Property $PRD$ demand that the position of a node in the accessibility ranking does not improve after an increase in its distances to other nodes. 
It has strong links to dominance preservation.

\begin{lemma} \label{Lemma6}
$ANO$ and $PRD$ implies $DP$.
\end{lemma}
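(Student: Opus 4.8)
The plan is to derive $DP$ from $ANO$ by a contradiction argument in which $PRD$ is applied repeatedly along a chain of transportation networks that connects $(N,D)$ to a network where nodes $i$ and $j$ are symmetric.

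Suppose node $i$ dominates node $j$ in $(N,D)$ but, contrary to $DP$, that $f_j(N,D) \le f_i(N,D)$. First I would introduce the auxiliary network $(N,D_0)$ that agrees with $(N,D)$ except that $d^0_{ik} = d_{jk}$ for every $k \in N \setminus \{i,j\}$: node $i$'s distances are raised so as to coincide with node $j$'s, while $d_{ij}$ and all distances not touching $i$ are left unchanged. I would verify that $(N,D_0) \in \mathcal{N}^n$. For $k,\ell \in N \setminus \{i,j\}$ the three triangle inequalities on $\{i,k,\ell\}$ in $D_0$ all reduce to triangle inequalities for node $j$ in $D$ (for instance $d^0_{ik} \le d^0_{i\ell} + d^0_{\ell k}$ becomes $d_{jk} \le d_{j\ell} + d_{\ell k}$); a triangle involving $j$ reduces to $d_{ij} \le d^0_{i\ell} + d^0_{\ell j} = 2 d_{j\ell}$, which holds because $d_{ij} \le d_{i\ell} + d_{\ell j} \le 2 d_{j\ell}$ by dominance; and triangles not involving $i$ are unchanged. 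Since the transposition of $i$ and $j$ is an automorphism of $(N,D_0)$, anonymity yields $f_i(N,D_0) = f_j(N,D_0)$.

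Next I would connect $(N,D)$ to $(N,D_0)$ by a finite chain $D = D^{(0)}, D^{(1)}, \dots, D^{(M)} = D_0$ of transportation networks in which $D^{(t+1)}$ differs from $D^{(t)}$ only by a strict increase of a single entry $d_{ik}$ with $k \neq i,j$, and then invoke $PRD$ along it. Because $i$ dominates $j$, at least one entry genuinely increases, so $M \ge 1$. Applying $PRD$ with the roles of its two compared nodes played by $j$ and $i$: from $f_j(N,D^{(0)}) \le f_i(N,D^{(0)})$ and the increase of some $d_{ik}$ at the first step, one gets $f_j(N,D^{(1)}) < f_i(N,D^{(1)})$, hence $f_j(N,D^{(1)}) \le f_i(N,D^{(1)})$, and iterating over all $M$ steps gives $f_j(N,D_0) < f_i(N,D_0)$, contradicting the equality obtained from $ANO$. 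This contradiction forces $f_i(N,D) < f_j(N,D)$, which is $DP$.

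The only non-routine point, and the one I would carry out in detail, is the existence of the chain: the vector $(d_{ik})_{k \neq i}$ must be moved to $(d_{jk})_{k \neq i}$ one coordinate at a time, always monotonically upward, without ever leaving the set of valid transportation networks. This is not automatic for a general convex polyhedron, so I would exhibit an explicit path: put $d^{(t)}_{ik} := \min\{d_{jk},\, d_{ik} + t\}$ for $k \neq i,j$, keep $d_{ij}$ and the remaining entries fixed, and let $t$ run from $0$ to $\max_k (d_{jk} - d_{ik})$. Using only the triangle inequalities for $i$ and for $j$ in $D$, one checks that every $(N,D^{(t)})$ lies in $\mathcal{N}^n$; at $t = 0$ one recovers $D$ (here dominance $d_{ik} \le d_{jk}$ is used), at the endpoint one reaches $D_0$, and each entry $d^{(t)}_{ik}$ is nondecreasing in $t$. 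Slicing $[0,\max_k(d_{jk}-d_{ik})]$ into sufficiently fine increments and realising each increment through single-entry increases performed in increasing order of their new value preserves the triangle inequality throughout and produces the required finite chain; everything else is bookkeeping with the definitions of $ANO$ and $PRD$.
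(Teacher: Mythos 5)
Your proof is correct and follows essentially the same route as the paper's: raise node $i$'s distances to coincide with node $j$'s, invoke $ANO$ on the resulting symmetric network, and bridge back to $(N,D)$ via $PRD$ (the paper states this directly, you phrase it as a contradiction, which is equivalent). The extra work you do — verifying that the symmetrized network satisfies the triangle inequality and constructing a monotone chain of single-entry increases that stays inside $\mathcal{N}^n$ — addresses a genuine gap that the paper's one-line proof leaves implicit, since $PRD$ as defined only permits changing one distance at a time.
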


\begin{proof}
Consider a transportation network $(N,D) \in \mathcal{N}$ and two distinct nodes $i,j \in N$ such that $d_{ik} \leq d_{jk}$ for all $k \in N \setminus \{ i,j \}$ with a strict inequality ($<$) for at least one $k$. Define transportation network $(N,D')$, which is the same as $(N,D)$ except for $d_{ik} = d_{jk}$ for all $k \in N \setminus \{ i,j \}$. Anonymity provides that $f_i(N,D) = f_j(N,D)$, so positive responsiveness to distances implies $f_i(N,D) < f_j(N,D)$.
\end{proof}

\begin{lemma} \label{Lemma7}
Distance sum, distance sum without ties and distance product meet $PRD$.
\end{lemma}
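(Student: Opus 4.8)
The plan is to treat the three indices separately, using the single structural fact about the move from $(N,D)$ to $(N,D')$: only the symmetric entry $d_{jk}=d_{kj}$ changes, increasing by $\varepsilon:=d'_{jk}-d_{jk}>0$, so row $i$ of the distance matrix is untouched while the row sums of $j$ and $k$ each gain $\varepsilon$ and every other row sum is fixed. Throughout I read the conclusion of $PRD$ as $f_i(N,D')<f_j(N,D')$.

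For distance sum the argument is immediate: $d_i^\Sigma(N,D')=d_i^\Sigma(N,D)$ since node $i$'s distances are unchanged, whereas $d_j^\Sigma(N,D')=d_j^\Sigma(N,D)+\varepsilon$, so the hypothesis $d_i^\Sigma(N,D)\le d_j^\Sigma(N,D)$ gives $d_i^\Sigma(N,D')=d_i^\Sigma(N,D)\le d_j^\Sigma(N,D)<d_j^\Sigma(N,D')$. The same template works for distance product, assuming (as in every example of the paper) that distances between distinct nodes are positive: then $d_i^\Pi(N,D')=d_i^\Pi(N,D)$ and $d_j^\Pi(N,D')=d_j^\Pi(N,D)\cdot(d'_{jk}/d_{jk})>d_j^\Pi(N,D)\ge d_i^\Pi(N,D)=d_i^\Pi(N,D')$.

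The real work is distance sum without ties. First I would rewrite Definition~\ref{Def9} as $d_i^{\Sigma <}(N,D)=d_i^\Sigma(N,D)+(i-1)\mu(D)/n$, where $\mu(D):=\min\{d_p^\Sigma(N,D)-d_q^\Sigma(N,D):p,q\in N\}=\min_p d_p^\Sigma(N,D)-\max_q d_q^\Sigma(N,D)\le 0$ is exactly the bracketed minimum in the definition. The key estimate is that this tie-breaking offset is Lipschitz in the perturbation: since raising $d_{jk}$ by $\varepsilon$ raises both $d_j^\Sigma$ and $d_k^\Sigma$ by $\varepsilon$ and leaves every other row sum fixed, both $\min_p d_p^\Sigma$ and $\max_q d_q^\Sigma$ increase by an amount in $[0,\varepsilon]$ (the lower bound because the row-sum multiset grows entrywise, the upper bound by testing the old minimiser resp. maximiser), whence $|\mu(D')-\mu(D)|\le\varepsilon$. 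Now expand
\[
d_j^{\Sigma <}(N,D')-d_i^{\Sigma <}(N,D')=\bigl[d_j^\Sigma(N,D)-d_i^\Sigma(N,D)\bigr]+\varepsilon+(j-i)\frac{\mu(D')}{n},
\]
substitute the hypothesis $d_i^{\Sigma <}(N,D)\le d_j^{\Sigma <}(N,D)$ in the equivalent form $d_j^\Sigma(N,D)-d_i^\Sigma(N,D)\ge(i-j)\mu(D)/n$, and collect terms to get $d_j^{\Sigma <}(N,D')-d_i^{\Sigma <}(N,D')\ge\varepsilon+(j-i)[\mu(D')-\mu(D)]/n$. Bounding the last summand by $|j-i|\,|\mu(D')-\mu(D)|/n\le(n-1)\varepsilon/n$ yields $d_j^{\Sigma <}(N,D')-d_i^{\Sigma <}(N,D')\ge\varepsilon-(n-1)\varepsilon/n=\varepsilon/n>0$, which is the required strict inequality.

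The main obstacle is exactly this last case: because the offset $(i-1)\mu(D)/n$ depends on the entire matrix through $\mu$, one cannot simply say that $d^{\Sigma <}$ inherits the strict order of $d^\Sigma$ (it need not). What rescues the argument is that the spread of the offsets across nodes is $|\mu(D)|(n-1)/n$, strictly below $|\mu(D)|$, and that slack — made quantitative by $|\mu(D')-\mu(D)|\le\varepsilon$ together with the factor $|i-j|/n<1$ — is precisely enough to absorb the change in $\mu$ and still leave a positive margin of $\varepsilon/n$. Everything else is routine bookkeeping.
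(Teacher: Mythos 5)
Your proof is correct. The paper states Lemma~\ref{Lemma7} without any proof, so there is no argument of the author's to compare against; evidently all three cases were regarded as routine. Two remarks on your treatment. First, you correctly read the (typographical) conclusion of Definition~\ref{Def16} as $f_i(N,D')<f_j(N,D')$, and your observation that only rows $j$ and $k$ of the row-sum vector change, each by exactly $\varepsilon=d'_{jk}-d_{jk}$, makes the distance-sum and distance-product cases immediate -- the positivity caveat for the product is needed and is left implicit in the paper. Second, the distance-sum-without-ties case is the only one with real content, precisely because the offset $(i-1)\mu(D)/n$ is a global functional of the matrix and the index need not inherit the strict order of $d^{\Sigma}$. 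Your resolution -- the Lipschitz bound $|\mu(D')-\mu(D)|\le\varepsilon$ obtained by tracking the min and max of an entrywise-increasing multiset, combined with the factor $|i-j|/n\le (n-1)/n<1$ so that the shift in offsets cannot absorb the gain of $\varepsilon$ in $d_j^{\Sigma}$, leaving a margin of $\varepsilon/n>0$ -- is exactly the right quantitative argument, and every step checks out.
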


\begin{remark} \label{Rem9}
Analogously to Theorem~\ref{Theo1}, distance sum can be characterized by $ANO$, $IDD$ and $PRD$. Their independence is shown by the same three accessibility indices distance sum without ties, inverse distance sum, and distance product.
\end{remark}

\begin{proposition} \label{Prop3}
There exists an anonymous accessibility index preserving dominance, which is not positively responsive to distances.
\end{proposition}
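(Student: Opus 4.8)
The plan is to build, for $n\geq 4$, an accessibility index that coincides with distance sum everywhere on $\mathcal{N}^n$ except on the isomorphism class of a single, carefully chosen network $(N,D_0)$, on which it is instead set equal to the inverse distance sum $\mathbf{d^{-\Sigma}}=-\mathbf{d^\Sigma}$. Anonymity and dominance preservation will then be inherited from $\mathbf{d^\Sigma}$, provided $D_0$ carries no dominance relation at all, whereas the failure of $PRD$ will be produced by the ``jump'' of $f$ between its two regimes when a single distance is perturbed slightly.

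First I would fix such a network. For $n\geq 4$ choose $(N,D_0)\in\mathcal{N}^n$ so that: (i) no node dominates another in $(N,D_0)$; (ii) two nodes, say $i$ and $j$, satisfy $d_i^\Sigma(N,D_0)>d_j^\Sigma(N,D_0)$; (iii) for some third node $k$ the entry $d_{jk}$ can be increased by a small positive amount without violating triangle inequality and without producing a network isomorphic to $(N,D_0)$. For $n=4$ one may take $d_{12}=3$, $d_{13}=5$, $d_{14}=8$, $d_{23}=7$, $d_{24}=6$, $d_{34}=4$: all triangle inequalities hold strictly, $d_1^\Sigma=d_2^\Sigma=d_3^\Sigma=16<18=d_4^\Sigma$, and for each of the six pairs of nodes the two relevant cross-distance differences have opposite signs (all six distances being distinct), so no node dominates another; here $i=4$, $j=1$, $k=2$, and $d_{12}$ can be raised to $4$. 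For $n>4$ one perturbs a generic complete graph in the same spirit.

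Next I would define $f:\mathcal{N}^n\to\mathbb{R}^n$ by $f(N,D)=-\mathbf{d^\Sigma}(N,D)$ whenever $(N,D)$ is isomorphic to $(N,D_0)$ and $f(N,D)=\mathbf{d^\Sigma}(N,D)$ otherwise; this is well defined since isomorphism classes partition $\mathcal{N}^n$. It is anonymous: the class of networks isomorphic to $(N,D_0)$ is closed under relabelling of nodes, and both $\mathbf{d^\Sigma}$ and $-\mathbf{d^\Sigma}$ are anonymous. It preserves dominance: on every network not isomorphic to $(N,D_0)$ we have $f=\mathbf{d^\Sigma}$, which preserves dominance by Lemma~\ref{Lemma3}; on every network isomorphic to $(N,D_0)$ there is no dominance relation at all --- dominance being an isomorphism invariant --- so $DP$ holds vacuously. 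To see that $PRD$ fails, observe that by (ii), $f_i(N,D_0)=-d_i^\Sigma(N,D_0)<-d_j^\Sigma(N,D_0)=f_j(N,D_0)$, so the hypothesis $f_i(N,D_0)\leq f_j(N,D_0)$ of $PRD$ is satisfied. Let $(N,D')$ be obtained from $(N,D_0)$ by replacing $d_{jk}$ with $d_{jk}+\delta$ for a $\delta$ with $0<\delta<d_i^\Sigma(N,D_0)-d_j^\Sigma(N,D_0)$ small enough for (iii). Then $(N,D')$ is not isomorphic to $(N,D_0)$, so $f(N,D')=\mathbf{d^\Sigma}(N,D')$; since $k\neq i$, $d_i^\Sigma(N,D')=d_i^\Sigma(N,D_0)$ while $d_j^\Sigma(N,D')=d_j^\Sigma(N,D_0)+\delta$, whence $f_i(N,D')=d_i^\Sigma(N,D_0)>d_j^\Sigma(N,D_0)+\delta=f_j(N,D')$. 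Thus $f_i(N,D')<f_j(N,D')$ fails, so $f$ is not positively responsive to distances.

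The only genuine obstacle is step one: exhibiting a transportation network that simultaneously has no dominance relation, has two nodes with distinct distance sums, and still leaves enough slack in triangle inequality for the perturbation of the $PRD$ argument. The explicit four-node distance matrix above settles this; the no-dominance requirement there amounts to six sign checks, one per pair of nodes, on the relevant $2\times 2$ sub-arrays of cross-distances, and for $n>4$ the construction is routine. (The assumption $n\geq 4$ is needed: by Remark~\ref{Rem1}, every anonymous dominance-preserving index on $\mathcal{N}^3$ equals the distance sum, which does satisfy $PRD$.)
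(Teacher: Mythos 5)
Your proof is correct, but it takes a genuinely different route from the paper. The paper's own proof exhibits a single ``natural'' counterexample, \emph{lexicographic eccentricity} (eccentricity with lexicographic tie-breaking over the sorted distance lists), argues it satisfies $ANO$ and $DP$, and locates the failure of $PRD$ in the insensitivity of that ranking to changes of certain non-extremal distances. You instead perform surgery on $\mathbf{d^\Sigma}$ itself: you flip its sign on one isomorphism class chosen to carry no dominance pair, so that $DP$ holds vacuously there and is inherited from Lemma~\ref{Lemma3} everywhere else, while the jump between the two regimes under an arbitrarily small admissible increase of a single entry destroys $PRD$. Your explicit four-node matrix does everything you need: the six triangle inequalities hold, $\mathbf{d^\Sigma}=(16,16,16,18)^\top$, every pair of nodes has cross-distance differences of opposite signs (so no dominance), and for $0<\delta<2$ the perturbed network has a different distance multiset, hence falls into the $\mathbf{d^\Sigma}$ regime, giving $f_4(N,D')=18>16+\delta=f_1(N,D')$ against the required strict reversal. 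What your approach buys is complete verifiability --- every claim reduces to finite checks on one explicit matrix, whereas the paper's argument leans on the unproved assertion that lexicographic eccentricity preserves dominance (which needs the observation that a componentwise inequality between distance vectors survives sorting) and on a rather terse explanation of why $PRD$ fails. What it costs is naturalness: your index is an ad hoc patchwork rather than a recognizable accessibility measure, and it needs $n\geq 4$ --- a restriction you correctly justify via Remark~\ref{Rem1} and which the paper leaves implicit.
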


\begin{proof}
It is enough to define a function which gives an accessibility ranking of the nodes.

\begin{definition} \label{Def17}
\emph{Lexicographic eccentricity}: node $i$ is more accessible than node $j$ if and only if $\max \left\{ d_{ik}: k \in  N \right\} < \max \left\{ d_{jk}: k \in  N \right\}$ or $\max_m \left\{ d_{ik}: k \in  N \right\} < \max_m \left\{ d_{jk}: k \in  N \right\}$ and $\max_\ell \left\{ d_{ik}: k \in  N \right\} = \max_\ell \left\{ d_{jk}: k \in  N \right\}$ for all $\ell = 1,2, \dots, m-1$.
\end{definition}

It is the same as eccentricity \citep{HageHarary1995}, however, ties are broken by a lexicographic application of its criteria.
Lexicographic eccentricity satisfies $ANO$ and $DP$, but it is not positively responsive to distances since some distances in certain intervals has no effect on the accessibility ranking. 
\end{proof}

\begin{remark} \label{Rem10}
Remark~\ref{Rem9} and Proposition~\ref{Prop3} show a somewhat surprising feature of axiomatizations. Distance sum can be characterized by the three independent axioms of $ANO$, $IDD$ and $PRD$ as well as by $ANO$, $IDD$ and $DP$. However, $DP$ is implied by $ANO$ and $PRD$. Therefore the second result seems to be stronger (as $DP$ is strictly weaker than the combination of $ANO$ and $PRD$) but it is by the first axiomatization.
\end{remark}

It remains an open question to give a sufficient condition for generalized distance sum to satisfy $PRD$.

\subsection{Two detailed examples} \label{Sec33}

In the previous discussion, generalized distance sum was only a 'tie-breaking rule' of distance sum in the reasonable interval of the parameter $\alpha$. Therefore, it remains to be seen whether the choice of a reasonable $\alpha$ may lead to significant changes in the accessibility ranking. The following examples reveal some interesting features of the measure, too.

\input{Figure5}

\begin{example} \label{Examp6}
The Marshall Islands in eastern Micronesia are divided into two atoll chains, one of them is Ralik. Figure~\ref{Fig5a} shows the graph of the voyaging network of this island chain, constructed by \citet{HageHarary1995}. The distances of islands are measured by the shortest path between them. For instance, the distance of Bikini and Jaluit is $5$ through Rongelap, Kwajalein, Namu and Ailinglaplap.

Note that Jaluit and Namorik are structurally equivalent in the network. Furthermore, Wotho dominates Rongelap and Ujae (it has extra links to Ujau and Lae, and to Bikini and Rongelap, respectively); Wotho and Rongelap dominate Bikini; Lae dominates Ujae; Ailinglaplap, Namorik and Jaluit dominate Ebon.

Generalized distance sums are presented in Figure~\ref{Fig5b}. The reasonable upper bound is $\hat{\alpha} \approx 0.2607$, indicated by a (black) vertical line. However, the accessibility ranking does not violate $DP$ for any $\alpha$ in Figure~\ref{Fig5b}.

Kwajalein (with a distance sum of $20$) and Namu ($21$) are the first and second nodes in the accessibility ranking for any $\alpha$. The difference between their generalized distance sums monotonically increases.
Ebon ($34$) is 'obviously' the least accessible node for any $\alpha$.
These nodes do not appear in Figure~\ref{Fig5b}. 

Compared to the distance sum, the following changes can be observed in the reasonable interval of $\alpha$:
\begin{itemize}
\item
The tie between Ailinglaplap and Wotho ($25$) is broken for Wotho. It makes sense since the nodes around Wotho have more links among them.

\item
The tie between Rongelap and Ujae ($27$) is broken for Ujae. It is justified by Ujae's direct connection to Lae instead of Bikini as the former is more accessible than the latter.

\item
Lae ($26$) overtakes Ailinglaplap ($25$). The cause is that the network has essentially two components: the link between Ailinglaplap and Namu is a cut-edge, and the above part around Kwajalein (where Lae is located) is bigger and has more internal links.
\end{itemize}
Some other changes occur outside the reasonable interval of $\alpha$: Rongelap and Ujau overtake Ailinglaplap as well as Bikini ($34$) overtakes Jaluit and Namorik ($32$). The reasoning applied for the case of Ailinglaplap and Lae is relevant here, too, and it is difficult to argue against these modifications of the ranking.
\end{example}

Example~\ref{Examp6} verifies that the analysis of accessibility rankings should not automatically limit to the reasonable interval of parameter $\alpha$, sometimes it is worth to consider values outside it.
Remember that the condition of Theorem~\ref{Theo2} is only sufficient, but not necessary for dominance preservation.

\input{Figure6}

\begin{example} \label{Examp7}
Consider the network in Figure~\ref{Fig6a}. The transportation network $(N,D) \in \mathcal{N}^{12}$ is such that the distances of nodes are measured by the shortest path between them. For example, $d_{1,12} = 3$ due to the path $(1,2)-(2,11)-(11,12)$.

Here node $1$ dominates nodes $3$, $8$ and $9$, node $2$ dominates node $11$, node $3$ dominates node $9$, node $5$ dominates node $10$ and node $11$ dominates node $12$.

Generalized distance sums are presented in Figure~\ref{Fig6b}. The reasonable upper bound is $\hat{\alpha} \approx 0.279$, indicated by a (black) vertical line. However, the accessibility ranking does not violate $DP$ for any $\alpha$ in Figure~\ref{Fig6b}.

Node $1$ ($d_1^{\Sigma} = 17$) is 'obviously' the most accessible node for any $\alpha$.
Nodes $11$ ($d_{11}^{\Sigma} = 29$) and $12$ ($d_{12}^{\Sigma} = 37$) are the last in the accessibility ranking for any reasonable value of $\alpha$. The difference between their generalized distance sums monotonically increases.
These nodes do not appear in Figure~\ref{Fig6b}.

Note that $x_{10}(\alpha)$ is not monotonic, it has a maximum around $\alpha = 0.145$.

Compared to the distance sum, the following changes can be observed in the reasonable interval of $\alpha$:
\begin{itemize}
\item
The tie between nodes $3$, $4$, $5$ and $7$ ($d^{\Sigma} = 22$) is broken in this order.
Node $3$ is more accessible than node $5$ since node $9$ is more accessible than node $10$. Node $3$ is more accessible than node $7$ since node $4$ is more accessible than node $8$ (and it has an extra link to node $9$).
Node $4$ is more accessible than node $5$ since node $3$ is more accessible than node $6$. Node $4$ is more accessible than node $7$ as its neighbours are more accessible.
Node $5$ is more accessible than node $7$ since node $4$ is more accessible than node $8$ (and it has an extra link to node $10$).
The tie breaks are highlighted in Figure~\ref{Fig6c}.

\item
The relation of nodes $3$ and $4$ ($d^{\Sigma} = 22$) is uncertain: both are connected to node $1$, while node $5$ is more accessible than node $6$ (favouring node $4$) and node $9$ is more accessible than node $10$ (favouring node $3$). Node $3$ benefits from a small $\alpha$ as shown in Figure~\ref{Fig6d}.
However, one can see the negligible difference of generalized distance sums, red and purple lines are indistinguishable on Figures~\ref{Fig6b} and \ref{Fig6c}.

\item
Node $10$ ($d_{10}^{\Sigma} = 27$) overtakes node $8$ ($d_{8}^{\Sigma} = 26$). It is more connected to the above 'component' of the network. Furthermore, node $8$ is closer to node $2$, which gradually losses positions in the accessibility ranking.

\item
Nodes $3$, $4$, $5$ and $7$ ($d^{\Sigma} = 22$) overtake node $2$ ($d_2^{\Sigma} = 21$). It is explained by the latter's vulnerable connections: if the link to node $1$ is eliminated, then node $2$ suffers severe deterioration in its accessibility.
\end{itemize}
For larger $\alpha$, node $6$ (and even nodes $9$ and $10$) become(s) more accessible than node $2$, too. 
\end{example}

Example~\ref{Examp7} illustrates that generalized distance sum may grab the vulnerability of some nodes' accessibility to link disruptions.
It can also reveal -- analogously to the non-principal eigenvectors of the adjacency matrix -- geographical subsystems of the network.

It is essential that any changes of the accessibility ranking in Examples~\ref{Examp6} and \ref{Examp7} may be attributable to the structure of the network graph.

\section{Concluding remarks} \label{Sec4}

Measuring accessibility is an important issue in network analysis. This paper has aimed to explore the methodological background of some accessibility indices for networks represented by a complete graph, where each link has a value such that a smaller number is preferred (e.g. distance, cost, or travel time) and triangle inequality holds.

The obvious solution of distance sum is characterized by three independent axioms, anonymity, independence of distance distributions and dominance preservation. Generalized distance sum, a parametric family of linear accessibility indices, is suggested in order to eliminate independence of distance distribution, a condition one would rather not have in certain applications. It considers the accessibility of vertices besides their distances (and gives an infinite depth to this argument) and depends on a parameter in order to control its deviation from distance sum.
Generalized distance sum seems to be promising with respect to its properties, has a good level of differentiation, and gives acceptable results on several transportation networks by making accessibility responsive to link disruptions.

Generalized distance sum -- similarly to connection array -- may get criticism for depending on a seemingly arbitrary value. While it certainly causes some difficulties in applications, parametrization is necessary to ensure the flexibility of the method. Alternative rankings should be regarded as revealing the uncertainty of accessibility. A research pursuing normative goals could not assume the complex task of making decisions, 

Although the suggested accessibility index offers a new way in the analysis of transportation networks, some issues have remained unanswered.
The choice of parameter requires further investigation and the characterization of generalized distance sum is worth a try. The behaviour of this index should be tested on other networks.
Generalized distance sum may provide a basis for measuring the total accessibility of the network by a single number. 
Possible domain extensions include networks where the importance of nodes is different, or the distance matrix is not symmetric.
They will be considered for future research.


\begin{thebibliography}{}

\bibitem[Amer et~al., 2012]{AmerGimenezMagana2012}
Amer, R., Gim{\'e}nez, J.~M., and Maga{\~n}a, A. (2012).
\newblock Accessibility measures to nodes of directed graphs using solutions
  for generalized cooperative games.
\newblock {\em Mathematical Methods of Operations Research}, 75(1):105--134.

\bibitem[Boldi and Vigna, 2014]{BoldiVigna2014}
Boldi, P. and Vigna, S. (2014).
\newblock Axioms for centrality.
\newblock {\em Internet Mathematics}, 10(3-4):222--262.

\bibitem[Borgatti and Everett, 2006]{BorgattiEverett2006}
Borgatti, S.~P. and Everett, M.~G. (2006).
\newblock A graph-theoretic perspective on centrality.
\newblock {\em Social Networks}, 28(4):466--484.

\bibitem[Carter, 1969]{Carter1969}
Carter, F.~W. (1969).
\newblock An analysis of the medieval {S}erbian oecumene: a theoretical
  approach.
\newblock {\em Geografiska Annaler. Series B, Human Geography}, 51(1):39--56.

\bibitem[Chebotarev, 1994]{Chebotarev1994}
Chebotarev, P. (1994).
\newblock Aggregation of preferences by the generalized row sum method.
\newblock {\em Mathematical Social Sciences}, 27(3):293--320.

\bibitem[Chebotarev and Shamis, 1997]{ChebotarevShamis1997b_rus}
Chebotarev, P. and Shamis, E. (1997).
\newblock A matrix-forest theorem and measuring relations in small social
  groups (in {R}ussian).
\newblock {\em Avtomatika i Telemekhanika}, 58(9):125--137.

\bibitem[Chien et~al., 2004]{Chienetal2004}
Chien, S., Dwork, C., Kumar, R., Simon, D.~R., and Sivakumar, D. (2004).
\newblock Link evolution: analysis and algorithms.
\newblock {\em Internet Mathematics}, 1(3):277--304.

\bibitem[Csat\'o, 2015]{Csato2015c}
Csat\'o, L. (2015).
\newblock Measuring centrality by a generalization of degree.
\newblock Corvinus Economics Working Papers 2/2015, Corvinus University of
  Budapest, Budapest.

\bibitem[Dequiedt and Zenou, 2014]{DequiedtZenou2014}
Dequiedt, V. and Zenou, Y. (2014).
\newblock Local and consistent centrality measures in networks.
\newblock Manuscript. \url{https://www.gate.cnrs.fr/IMG/pdf/Dequiedt2015.pdf}.

\bibitem[Freeman, 1977]{Freeman1977}
Freeman, L.~C. (1977).
\newblock A set of measures of centrality based on betweenness.
\newblock {\em Sociometry}, 40(1):35--41.

\bibitem[Freeman, 1979]{Freeman1979}
Freeman, L.~C. (1979).
\newblock Centrality in social networks: conceptual clarification.
\newblock {\em Social Networks}, 1(3):215--239.

\bibitem[Frobenius, 1908]{Frobenius1908}
Frobenius, G. (1908).
\newblock {\"U}ber {M}atrizen aus positiven {E}lementen {I}.
\newblock {\em Sitz.-Ber. Preuss. Akad. Wiss.}, Berlin:471--476.

\bibitem[Frobenius, 1909]{Frobenius1909}
Frobenius, G. (1909).
\newblock {\"U}ber {M}atrizen aus positiven {E}lementen {II}.
\newblock {\em Sitz.-Ber. Preuss. Akad. Wiss.}, Berlin:514--518.

\bibitem[Frobenius, 1912]{Frobenius1912}
Frobenius, G. (1912).
\newblock {\"U}ber {M}atrizen aus nicht negativen {E}lementen.
\newblock {\em Sitz.-Ber. Preuss. Akad. Wiss.}, Berlin:456--477.

\bibitem[Garg, 2009]{Garg2009}
Garg, M. (2009).
\newblock Axiomatic foundations of centrality in networks.
\newblock Manuscript.
  \url{http://papers.ssrn.com/sol3/papers.cfm?abstract_id=1372441}.

\bibitem[Garrison, 1960]{Garrison1960}
Garrison, W.~L. (1960).
\newblock Connectivity of the interstate highway system.
\newblock {\em Papers in Regional Science}, 6(1):121--137.

\bibitem[Gonz\'alez-Díaz et~al., 2014]{Gonzalez-DiazHendrickxLohmann2013}
Gonz\'alez-Díaz, J., Hendrickx, R., and Lohmann, E. (2014).
\newblock Paired comparisons analysis: an axiomatic approach to ranking
  methods.
\newblock {\em Social Choice and Welfare}, 42(1):139--169.

\bibitem[Gould, 1967]{Gould1967}
Gould, P.~R. (1967).
\newblock On the geographical interpretation of eigenvalues.
\newblock {\em Transactions of the Institute of British Geographers},
  (42):53--86.

\bibitem[Hage and Harary, 1995]{HageHarary1995}
Hage, P. and Harary, F. (1995).
\newblock Eccentricity and centrality in networks.
\newblock {\em Social Networks}, 17(1):57--63.

\bibitem[Harary, 1959]{Harary1959}
Harary, F. (1959).
\newblock Status and contrastatus.
\newblock {\em Sociometry}, 22(1):23--43.

\bibitem[Hay, 1975]{Hay1975}
Hay, A. (1975).
\newblock On the choice of methods in the factor analysis of connectivity
  matrices: a comment.
\newblock {\em Transactions of the Institute of British Geographers},
  (66):163--167.

\bibitem[Katz, 1953]{Katz1953}
Katz, L. (1953).
\newblock A new status index derived from sociometric analysis.
\newblock {\em Psychometrika}, 18(1):39--43.

\bibitem[Kitti, 2012]{Kitti2012}
Kitti, M. (2012).
\newblock Axioms for centrality scoring with principal eigenvectors.
\newblock Manuscript. \url{http://www.ace-economics.fi/kuvat/dp79.pdf}.

\bibitem[Landherr et~al., 2010]{LandherrFriedlHeidemann2010}
Landherr, A., Friedl, B., and Heidemann, J. (2010).
\newblock A critical review of centrality measures in social networks.
\newblock {\em Business \& Information Systems Engineering}, 2(6):371--385.

\bibitem[Lindelauf et~al., 2013]{LindelaufHamersHusslage2013}
Lindelauf, R. H.~A., Hamers, H. J.~M., and Husslage, B. G.~M. (2013).
\newblock Cooperative game theoretic centrality analysis of terrorist networks:
  the cases of {J}emaah {I}slamiyah and {A}l {Q}aeda.
\newblock {\em European Journal of Operational Research}, 229(1):230--238.

\bibitem[Ma{\'c}kiewicz and Ratajczak, 1996]{MackiewiczRatajczak1996}
Ma{\'c}kiewicz, A. and Ratajczak, W. (1996).
\newblock Towards a new definition of topological accessibility.
\newblock {\em Transportation Research Part B: Methodological}, 30(1):47--79.

\bibitem[Monsuur and Storcken, 2004]{MonsuurStorcken2004}
Monsuur, H. and Storcken, T. (2004).
\newblock Centers in connected undirected graphs: an axiomatic approach.
\newblock {\em Operations Research}, 52(1):54--64.

\bibitem[Perron, 1907]{Perron1907}
Perron, O. (1907).
\newblock Zur {T}heorie der {M}atrices.
\newblock {\em Mathematische Annalen}, 64(2):248--263.

\bibitem[Pitts, 1965]{Pitts1965}
Pitts, F.~R. (1965).
\newblock A graph theoretic approach to historical geography.
\newblock {\em The Professional Geographer}, 17(5):15--20.

\bibitem[Pitts, 1979]{Pitts1979}
Pitts, F.~R. (1979).
\newblock The medieval river trade network of {R}ussia revisited.
\newblock {\em Social Networks}, 1(3):285--292.

\bibitem[Rubinstein, 1980]{Rubinstein1980}
Rubinstein, A. (1980).
\newblock Ranking the participants in a tournament.
\newblock {\em SIAM Journal on Applied Mathematics}, 38(1):108--111.

\bibitem[Sabidussi, 1966]{Sabidussi1966}
Sabidussi, G. (1966).
\newblock The centrality index of a graph.
\newblock {\em Psychometrika}, 31(4):581--603.

\bibitem[Shimbel, 1951]{Shimbel1951}
Shimbel, A. (1951).
\newblock Applications of matrix algebra to communication nets.
\newblock {\em The Bulletin of Mathematical Biophysics}, 13(3):165--178.

\bibitem[Shimbel, 1953]{Shimbel1953}
Shimbel, A. (1953).
\newblock Structural parameters of communication networks.
\newblock {\em The Bulletin of Mathematical Biophysics}, 15(4):501--507.

\bibitem[Slater, 1981]{Slater1981}
Slater, P.~J. (1981).
\newblock On locating a facility to service areas within a network.
\newblock {\em Operations Research}, 29(3):523--531.

\bibitem[Straffin, 1980]{Straffin1980}
Straffin, P.~D. (1980).
\newblock Linear algebra in geography: eigenvectors of networks.
\newblock {\em Mathematics Magazine}, 53(5):269--276.

\bibitem[Stutz, 1973]{Stutz1973}
Stutz, F.~P. (1973).
\newblock Accessibility and the effect of scalar variation on the powered
  transportation connection matrix.
\newblock {\em Geographical Analysis}, 5(1):61--66.

\bibitem[Taaffe and Gauthier, 1973]{TaaffeGauthier1973}
Taaffe, E.~J. and Gauthier, H.~L. (1973).
\newblock {\em Geography of Transportation}.
\newblock Prentice-Hall, New Yersey.

\bibitem[Tinkler, 1972]{Tinkler1972}
Tinkler, K.~J. (1972).
\newblock The physical interpretation of eigenfunctions of dichotomous
  matrices.
\newblock {\em Transactions of the Institute of British Geographers},
  (55):17--46.

\bibitem[Tinkler, 1975]{Tinkler1975}
Tinkler, K.~J. (1975).
\newblock On the choice of methods in the factor analysis of connectivity
  matrices: a reply.
\newblock {\em Transactions of the Institute of British Geographers},
  (66):168--171.

\bibitem[Wang and Cullinane, 2008]{WangCullinane2008}
Wang, Y. and Cullinane, K. (2008).
\newblock Measuring container port accessibility: an application of the
  principal eigenvector method ({PEM}).
\newblock {\em Maritime Economics \& Logistics}, 10(1):75--89.

\end{thebibliography}

\end{document}